\newtheorem{theorem}{Theorem}
\newtheorem{proposition}{Proposition}
\newtheorem{lemma}{Lemma}
\newtheorem{corollary}{Corollary}
\theoremstyle{definition}
\theoremstyle{remark}
\newtheorem{remark}{Remark}
\title{\bf Too Big to Monitor? Network Scale and the Breakdown of Decentralized Monitoring}
\author{ Guy Tchuente\thanks{Department of Agricultural Economics}\\
   Purdue University
}
\date{November 2025}
\begin{document}
\maketitle

\begin{abstract}
Many public services are produced in networked systems where quality depends on local effort and on how higher-level authorities monitor providers. We develop a simple model in which monitoring is a public good on a network with strategic complementarities. A regulator chooses between decentralized monitoring (cheaper, local oversight) and centralized monitoring (more costly, but internalizing spillovers). The model delivers an endogenous centralization threshold: for a given spillover strength, there exists a network size $n^\ast(\lambda)$ above which centralized monitoring strictly dominates; equivalently, for a given network size $n$, there is a critical complementarity $\lambda^\ast(n)$ beyond which decentralized oversight becomes fragile. A stochastic extension suggests  that, above this region, idiosyncratic shocks are amplified, producing stronger peer correlations, higher variance, and more frequent deterioration in quality. We test these predictions in the U.S.\ nursing home sector, where facilities belong to overlapping organizational (chain) and geographic (county) networks. Using CMS facility data, We document strong within-chain and within-county peer effects and estimate network-size thresholds for severe regulatory failure (Special Focus Facility designations). We find sharp breakpoints at roughly 7 homes per county and 34 homes per chain, above which spillovers intensify and deficiency outcomes become more dispersed and prone to deterioration, especially in large counties.
\end{abstract}
\bigskip

\noindent \textbf{Keywords:} monitoring, networks, public good, delegation, thresholds, nursing homes. \\
\noindent \textbf{JEL codes:} D85, H11, I11, I18.

\newpage
\doublespacing
\section{Introduction}
\label{sec:intro}

Many public services are delivered through networks of semi-autonomous units:
school districts overseeing individual schools, hospital systems coordinating
clinics and wards, and national regulators monitoring thousands of nursing
homes. In such environments, regulators face a basic question: when does
decentralized monitoring fail as the network grows, and when does it become
optimal to centralize oversight despite higher administrative costs?

The answer depends not only on monitoring technology and information, but also
on the structure of the underlying network. When units are linked by
organizational and geographic ties, effort in one unit affects others through
shared practices, reputational spillovers, common labor markets, and informal
benchmarking. We refer to these linkages as \emph{network complementarities}.
At small scales or when complementarities are weak, local monitors can treat
facilities as effectively independent. As networks expand and
interdependencies strengthen, shocks to one unit propagate more widely, making
decentralized oversight increasingly fragile. At sufficiently large scale, the
regulator may prefer to centralize monitoring to internalize these
interactions, even if centralization is intrinsically more costly.

This paper develops a simple model of this trade-off and tests its predictions
in a large, highly regulated sector. In the model, a regulator chooses between
decentralized and centralized monitoring for a network of identical production
units. Monitoring effort affects the payoff of each unit and enters as a public
good on the network. Units' efforts exhibit strategic complementarities: the
marginal return to effort increases with the average effort of neighboring
units. Formally, the environment is a linear–quadratic game on a network in the
spirit of \citet{ballester2006s} and \citet{bramoulle2014strategic}, in which
equilibrium effort can be expressed through a resolvent of the network
adjacency matrix. Under decentralized monitoring, these complementarities are
only partially internalized; under centralized monitoring, they are internalized
more fully but at a higher institutional cost. We show that the interaction of
network size and complementarities generates an \emph{endogenous
centralization threshold}: for each spillover strength $\lambda$, there exists
a network size $n^\ast(\lambda)$ above which centralized monitoring strictly
dominates decentralization; equivalently, for each network size $n$ there is a
critical $\lambda^\ast(n)$ above which decentralization becomes fragile.

A stochastic extension of the model introduces idiosyncratic shocks to units’
effort incentives. In the presence of complementarities, these shocks propagate
through the network in a linear–Gaussian fashion, as in models of shock
propagation on networks \citep[e.g.][]{acemoglu2015systemic}. We show that, as
$\lambda$ approaches the threshold $\lambda^\ast(n)$ from below and exceeds it,
the cross-sectional variance of equilibrium effort increases sharply and the
system becomes more sensitive to shocks. Above the threshold, the network
amplifies disturbances, leading to greater dispersion and more frequent
deterioration in outcomes. This provides a set of empirical predictions: in
networks exceeding $n^\ast(\lambda)$, (i) within-network peer correlations
should be stronger, (ii) dispersion in failure-prone outcomes should be higher,
and (iii) outcome deterioration over time should be more pronounced, relative
to smaller networks. These predictions echo tipping and phase-transition
phenomena in models of social interactions and systemic risk
\citep[e.g.][]{brock2001discrete,acemoglu2015systemic}, but in a monitoring
and enforcement setting.

We evaluate these predictions in the U.S.\ nursing home sector. All certified
skilled nursing facilities (SNFs) are regulated by the Centers for Medicare and
Medicaid Services (CMS), which inspects facilities, records deficiencies,
levies sanctions, and maintains the Nursing Home Care Compare Five-Star quality
ratings. Facilities are embedded in two overlapping networks. First, about
two-thirds belong to multi-facility corporate chains that share managerial
practices, training, branding, and sometimes staffing pools. Second, all
facilities operate within counties, sharing labor markets, inspectors, local
funding environments, and community reputation. These organizational and
geographic structures create precisely the kinds of complementarities emphasized
by the theory. In addition, CMS operates the Special Focus Facility (SFF)
program targeting persistently poor performers for intensive oversight,
providing a direct measure of severe monitoring failures.

The empirical analysis proceeds in three steps. First, we document
reduced-form spillovers within county and chain networks by regressing facility
outcomes on leave-one-out peer means, controlling for ownership, capacity, and
state fixed effects. We find substantial conditional correlations: a one-star
increase in the county peer mean is associated with large increases in a
facility's own overall and staffing ratings and with substantial changes in
deficiencies. For chain-affiliated facilities, peer means within the chain have
even larger coefficients than county peers, consistent with strong
organizational spillovers.

Second, we estimate network-size thresholds in monitoring failures by relating
the number of SFF facilities to network size at the county and chain levels.
Using a single-break kink specification and Bai--Perron-type tests
\citep[e.g.][]{bai1998estimating,bai2003critical}, we find that the
relationship between the number of SFF facilities and the number of nursing
homes in a county is well described by a break at roughly 7 SNFs. Below this
threshold, the incidence of SFF placements increases only slowly with county
size. Above it, the slope increases sharply: a relatively small set of large
counties accounts for a disproportionate share of SFF designations. For
organizational networks, we find a breakpoint at about 34 facilities per chain.
Chains larger than this threshold exhibit a markedly higher incidence of SFF
placements or severe regulatory actions, even after controlling for ownership
composition and average ratings.

Third, we compare spillover strength, dispersion, and deterioration across
these thresholds. Spillover coefficients estimated separately below and above
the county threshold are substantially larger in large counties, both for
overall ratings and for staffing and deficiency measures. A similar pattern
holds when we split chains at the 34-facility threshold: organizational
spillovers are stronger in large chains. For dispersion, we find that the
cross-sectional variance of deficiencies is much higher in large counties than
in small ones, while the variance of ratings changes little or declines. In
large chains, outcomes are more homogeneous: variance in ratings and
deficiencies is lower than in smaller chains. Finally, a simple deterioration
measure—the change in deficiencies between inspection cycles—is significantly
worse for facilities in large counties, and positively but imprecisely related
to chain size above the threshold. These findings are broadly consistent with
the model’s mechanism, with the strongest evidence for amplification and
deterioration in large geographic networks and for deficiency-based measures of
monitoring failure.

The paper contributes to several literatures. First, it speaks to theories of
centralization, delegation, and organizational design. \citet{aghion1997formal}
distinguish formal and real authority and show how delegation can motivate local
information acquisition even when headquarters retains ultimate decision rights.
\citet{alonso2008does} analyze the trade-off between local adaptation and
coordination in multidivisional organizations. \citet{garicano2000hierarchies}
studies hierarchies as information-processing devices, in which central layers
are introduced when problems become sufficiently complex. \citet{dessein2002authority}
and \citet{acemoglu2007technology} show how communication and information
technologies shape the allocation of authority. Our model is close in spirit to
this work in treating centralization as an institutional choice, but emphasizes
network scale and spillovers as independent determinants of the optimal
monitoring architecture.

Second, the paper relates to games and public-good provision on networks with
strategic complementarities. In linear-quadratic public-good games,
\citet{bramoulle2007public} show that equilibrium contributions depend on
Bonacich centrality, and \citet{ballester2006s} identify ``key players'' whose
removal most reduces aggregate activity. \citet{bramoulle2014strategic}
provide a general treatment of linear–quadratic network games with strategic
complementarities, and \citet{galeotti2010law} and \citet{elliott2019network}
study intervention and network design. Our framework shares the focus on
complementarities and network structure but shifts attention from individual
contributions to the choice of monitoring regime. The combination of
complementarities in effort and network size generates a phase-transition-like
behavior in monitoring outcomes: beyond a critical scale, decentralized
monitoring becomes fragile and a more costly centralized regime becomes strictly
optimal. This is related to work on systemic risk and phase transitions in
networks, where increased interconnectedness can first reduce and then amplify
aggregate volatility \citep[e.g.][]{acemoglu2015systemic}, and to threshold
effects in models of social interactions \citep[e.g.][]{brock2001discrete},
but with a focus on monitoring and enforcement rather than financial or
technological shocks.

Third, we connect to the institutions and state capacity literature.
\citet{acemoglu2010institutions} and \citet{la1999quality} emphasize the role
of political and legal institutions in shaping public-good provision. At a more
micro level, \citet{olken2007monitoring} and \citet{besley2011pillars} study
how monitoring, audits, and accountability affect corruption and service
delivery. Our contribution is to show how institutional design (centralized
versus decentralized monitoring) interacts with network structure to generate
endogenous thresholds in monitoring performance, even in a high-capacity
setting with a strong national regulator.

Finally, the paper contributes to the nursing home regulation literature
documenting variation in quality, ownership, and regulatory oversight. CMS’s
Five-Star rating system and the SFF program have been widely studied as tools
for improving quality and transparency. We recast nursing homes as nodes in
overlapping monitoring networks and provide new evidence that severe monitoring
failures are concentrated in large networks exceeding finite size thresholds,
after conditioning on ownership and other observables, and that the geographic
architecture of oversight plays a central role in the amplification of
deficiency-based failures.

\paragraph{Theory roadmap.}

To formalize these ideas, Section~\ref{sec:conceptual} develops a minimal
network model in which monitoring effort is a public good with strategic
complementarities, embedded in a linear–quadratic game on a fixed network in
the spirit of \citet{ballester2006s} and \citet{bramoulle2014strategic}. In the
benchmark (linear-in-means) case with a complete network, the key formal
result, Theorem~\ref{thm:n_unit_threshold_final}, shows that for any given
network size $n$ there is a unique complementarity threshold
$\lambda^\ast(n)$ above which centralized monitoring strictly dominates
decentralization, and that $\lambda^\ast(n)$ is strictly decreasing in $n$.
Corollary~\ref{cor:n_of_lambda} inverts this relationship and defines an
endogenous network-size threshold $n^\ast(\lambda)$: for a given spillover
strength $\lambda$, decentralized monitoring is optimal only when the network
is sufficiently small ($n < n^\ast(\lambda)$). The empirical breakpoints we
estimate for counties (around 7 facilities) and chains (around 34 facilities)
can thus be interpreted as realizations of $n^\ast(\lambda)$ for the relevant
geographic and organizational networks.

Appendix~\ref{app:theory} extends the framework in two directions. First, it
allows for positive decentralized complementarities ($\lambda_D>0$) and more
general cost structures, and Theorem~\ref{thm:threshold} shows that the unique
threshold structure in network size is preserved. Second, a spectral
generalization moves from the linear-in-means (complete-network) case to an
arbitrary network $G$, and shows that the \emph{effective} scale of the network
is governed by the product $\lambda \psi(G)$, where $\psi(G)$ is the largest
eigenvalue of $G$: the decentralization–centralization decision is determined
by whether this spectral network size falls below or crosses the relevant
threshold (Theorem~\ref{thm:spectral_threshold}). Appendix~\ref{app:stochastic}
then develops a stochastic, linear–Gaussian version of the model with
correlated shocks and network-based information aggregation and shows that the
same spectral object $\lambda \psi(G)$ drives variance amplification and
deterioration (Proposition~\ref{prop:variance} and
Proposition~\ref{prop:variance_above_threshold}), yielding the dispersion and
deterioration predictions we test empirically.

The remainder of the paper is organized as follows. Section \ref{sec:lit}
situates the paper within the literatures on centralization, networks,
institutions, and nursing home regulation. Section \ref{sec:conceptual}
develops the conceptual and theoretical framework, including a minimal model of
monitoring on a network with complementarities and its testable predictions.
Section \ref{sec:data} describes the institutional setting, data, and
construction of key variables. Section \ref{sec:strategy} presents the
empirical strategy, including spillover regressions, threshold estimation, and
variance and deterioration tests. Section \ref{sec:results} reports the main
results. Section \ref{sec:discussion} discusses policy implications, and
Section \ref{sec:conclusion} concludes.

\section{Related Literature}
\label{sec:lit}

This paper builds on four strands of work: (i) theories of centralization and
organizational design, (ii) games and public-good provision on networks with
strategic complementarities, (iii) institutions, monitoring, and state
capacity, and (iv) empirical studies of nursing-home regulation and quality.
Our main contribution is to integrate these literatures in a simple model of
monitoring on networks that delivers an endogenous centralization threshold, and
to provide evidence that network scale is systematically related to monitoring
failures in a large regulated sector.

\subsection{Centralization, Delegation, and Organizational Design}

A large theoretical literature studies how organizations allocate authority
between central and local decision-makers. \citet{aghion1997formal} distinguish
formal from real authority and show how delegation can motivate local
information acquisition even when ultimate decision rights rest with
headquarters. \citet{alonso2008does} analyze a multidivisional organization in
which decisions must both adapt to local information and be coordinated across
units; they show that the trade-off between coordination and adaptation can
make either centralization or decentralization optimal, depending on
information flows and the need for coherence.

Technological and informational change can also shift the balance between
central and local control. \citet{acemoglu2007technology} develop a model in
which information and communication technologies affect the relative benefits
of centralized versus decentralized decision-making and show empirically that
firms closer to the technological frontier and operating in more heterogeneous
environments are more likely to decentralize. \citet{garicano2000hierarchies}
models hierarchies as information-processing structures in which additional
layers are introduced when problems become sufficiently complex. In
\citet{dessein2002authority}, the allocation of formal authority is shaped by
communication frictions between headquarters and informed subordinates. Related
empirical work examines how reallocating formal authority within organizations
affects performance in practice.

Our model is close in spirit to this literature in treating centralization as an
institutional choice, but it emphasizes a different mechanism. We study
monitoring effort as a public good on a network with strategic
complementarities and show that network size and spillover strength jointly
generate an endogenous centralization threshold: for small networks,
decentralization dominates, but once the network becomes sufficiently large,
centralized monitoring strictly dominates even if it is intrinsically more
costly. This mechanism complements information-based theories of
decentralization by highlighting network scale and interconnectedness as
independent drivers of the optimal monitoring architecture.

\subsection{Networks, Strategic Complementarities, and Public Goods}

The importance of network structure for economic outcomes has been widely
recognized. A large literature studies games played on fixed networks, in which
payoffs depend on the actions of neighbors. In linear-quadratic public-good
games, \citet{bramoulle2007public} show that equilibrium contributions depend on
Bonacich centrality and that network architecture shapes aggregate provision.
\citet{ballester2006s} show that in games with strategic complementarities, a
player’s Bonacich centrality is sufficient to characterize her impact on
aggregate activity and identify ``key players'' whose removal most reduces
aggregate effort. \citet{bramoulle2014strategic} provide a general treatment of
linear–quadratic network games with strategic complementarities, and subsequent
work extends these insights to endogenous networks and optimal interventions
\citep[e.g.][]{galeotti2010law,ballester2014key,elliott2019network}.

Another line of work emphasizes nonlinearities, thresholds, and contagion in
networks. Models of social interactions
\citep[e.g.][]{brock2001discrete} highlight how complementarities in
behavior can generate tipping and multiple equilibria. In financial and
production networks, \citet{acemoglu2015systemic} and related contributions
show how increased interconnectedness can first diversify and then amplify
aggregate shocks as network eigenvalues approach critical values.

Our framework shares the focus on strategic complementarities and network
structure but differs in the dimension of interest. Rather than analyzing
individual contributions or contagion per se, we study the choice of
\emph{monitoring architecture} for a network of production units. Monitoring
effort is a public good on the network, subject to spillovers across facilities.
The combination of complementarities in effort and network size yields a
continuous but sharply nonlinear threshold in the relative performance of
centralized versus decentralized monitoring: beyond a critical scale,
decentralized monitoring becomes fragile, and a more costly centralized regime
becomes strictly optimal. The stochastic extension embeds this structure in a
linear–Gaussian environment with correlated shocks and network-based
information aggregation, so that variance and deterioration patterns inherit the
spectral properties of the network, in line with work on shock propagation and
systemic risk \citep[e.g.][]{acemoglu2015systemic}.

\subsection{Institutions, Monitoring, and State Capacity}

A third strand of literature emphasizes the role of institutions and state
capacity in shaping public-good provision. At the macro level, work in
comparative development argues that political and economic institutions are
fundamental determinants of long-run growth and the quality of government
\citep[e.g.][]{acemoglu2010institutions}. \citet{la1999quality} and related
contributions document how legal origins and institutional quality are
associated with regulation, public-service provision, and a range of economic
and political outcomes. These studies highlight persistent cross-country
differences in the ability of the state to monitor, enforce, and provide basic
services.

At a more micro level, field experiments and program evaluations examine how
specific monitoring and accountability mechanisms affect corruption and service
delivery. \citet{olken2007monitoring} show that the design of audits and
community monitoring significantly affects corruption in Indonesian road
projects. \citet{besley2011pillars} develop a framework for thinking about
taxation, state capacity, and political accountability as ``pillars of
prosperity'' and emphasize the complementarity between monitoring capacity and
institutional constraints. Other contributions study how local governance
arrangements and citizen participation shape the quality of schools, clinics,
and local infrastructure.

Our analysis is closely related to this work in treating monitoring as an
institutional choice, but we focus on the interaction between institutional
design and \emph{network structure}. The central question is not whether
monitoring exists, but how a given capacity is deployed across a network of
production units and how the scale and topology of that network affect the
performance of decentralized versus centralized monitoring regimes. The theory
provides a microfoundation for network-size thresholds in monitoring failures,
and the empirical evidence shows that such thresholds are salient even in a
high-capacity setting with a strong national regulator.

\subsection{Application: Nursing Home Regulation and Chain Networks}

The empirical application of the model is to the U.S. nursing home sector,
which has generated a large literature on quality, ownership, and regulation.
Since 2008, the Centers for Medicare and Medicaid Services (CMS) has used the
Five-Star Quality Rating System on Nursing Home Care Compare to summarize
inspection, staffing, and quality information into a star rating intended to
inform consumers and incentivize quality improvement. Empirical work has
examined how the introduction and evolution of star ratings affected resident
sorting, disparities, and clinical outcomes, and how ownership and chain status
are related to quality and regulatory deficiencies. Another strand studies the
performance of the Special Focus Facility (SFF) program and related
enforcement tools, emphasizing the difficulty of sustaining improvements among
chronically poor-performing facilities and the resource constraints that limit
intensified oversight to a small subset of homes at any given time.

We contribute to this literature by recasting nursing homes as nodes in
overlapping \emph{monitoring networks} defined by counties and corporate chains
and by documenting sharp thresholds in the incidence of chronic monitoring
failure at specific network sizes. Conditional on ownership, facility
characteristics, and case mix, we show that large geographic and organizational
networks are more likely to contain chronically failing facilities and that
above the estimated thresholds, peer correlations, cross-sectional variance, and
deterioration in outcomes are all higher. Interpreted through the lens of the
model, these patterns suggest that monitoring failure is not merely a function
of ownership or regulatory stringency, but also of network scale and the
architecture of monitoring.

\subsection{Methodological Links}

Finally, our work is related to methodological contributions on network models
and structural breaks. The social-interactions literature emphasizes how
individual outcomes may depend on group or network averages, leading to
reflection and identification challenges. \citet{manski1993identification}
formalize these issues in linear-in-means models, and
\citet{bramoulle2009identification} show how network structure can be used to
identify endogenous and exogenous effects. \citet{tchuente2019weak} studies 
identification in network models, providing guidance for estimation and
inference when the network structure induced weak identification. Our empirical
strategy draws on this tradition but uses linear projections of facility
outcomes on leave-one-out peer means as reduced-form objects that summarize the
covariance structure implied by the model, rather than as estimates of causal
peer effects.

On the time-series and panel side, we borrow tools from the structural-break
literature, in particular threshold and kink models and Bai--Perron-type
procedures for detecting breaks in linear regressions
\citep[e.g.][]{bai1998estimating,bai2003critical}. We use these methods to
estimate thresholds in network size at which the incidence of chronic
monitoring failure changes sharply, and then to compare spillover strength,
variance, and deterioration on either side of the estimated thresholds. The
stochastic extension in Appendix~\ref{app:stochastic} is cast in a
linear–Gaussian network environment, so that equilibrium effort and outcomes
are affine functions of shocks, and the resulting variance structure can be
interpreted through the lens of network resolvents and eigenvalues, as in
models of shock propagation and systemic risk. In this sense, the empirical
work combines reduced-form network regressions with structural-break techniques
to test the predictions of a network-based model of monitoring with strategic
complementarities.

\section{Conceptual and Theoretical Framework}
\label{sec:conceptual}

This section provides the conceptual motivation for the model and develops a
minimal theoretical framework. We first discuss how monitoring operates in
networked public-good environments and how externalities arise across
facilities. We then introduce a simple model of monitoring with network
complementarities and derive its threshold implications.

\subsection{Monitoring and Delegation in Networked Environments}
\label{subsec:framework_delegation}

Many public-good production environments---such as school systems, hospital
networks, and nursing homes---require monitoring to ensure minimum quality
standards. Monitoring can be implemented at different levels of
centralization. A decentralized approach relies on local oversight, imposes
lower administrative costs, and treats each unit as effectively independent. A
centralized approach uses a higher-level authority that inspects, coordinates,
or disciplines all units at once. Centralization is typically more costly, but
it has the advantage of internalizing interactions across units.

These interactions arise because a unit’s performance is not independent of
others. When units operate within a network, the effort or quality of one
facility affects the performance of others through shared resources, common
reputational concerns, joint training, information flows, or comparative
benchmarking. We refer to these connections as \emph{network
complementarities}. When complementarities are weak, monitoring units
individually is sensible. When complementarities become strong, the regulator
may wish to internalize them by centralizing monitoring, even if centralization
is intrinsically more costly.

The conceptual question is therefore: \emph{when do complementarities become
strong enough to justify centralized monitoring?} The model below formalizes
this trade-off and shows that the answer depends on both the strength of
externalities and the size of the network through which they operate.

\subsection{Network Structure: Chains and Counties}
\label{subsec:framework_two_level}

In the empirical setting, each nursing home belongs to two distinct networks:

\begin{enumerate}[leftmargin=1.4em]
\item \textbf{Organizational networks (chains).}  
Facilities within the same chain share managerial practices, training,
branding, operational protocols, and sometimes staffing pools. Quality effort
therefore generates organizational spillovers: when one facility improves, it
raises expectations, performance norms, and interactions within the chain.
Chain size determines the scale of these organizational externalities.

\item \textbf{Geographic networks (counties).}  
Facilities within the same county interact through common regulators, shared
labor markets, county-level funding allocations, community reputation, and
competitive or collaborative dynamics. These forces generate geographic
spillovers across facilities operating in the same local environment. County
size determines the scale of these geographic externalities.
\end{enumerate}

Although these networks are distinct, the economic mechanism governing them is
the same: in larger networks, quality effort by one facility affects more
units, amplifying complementarities. The theoretical model applies separately
to each network by assigning $n$ to the relevant network size (chain size or
county size). This delivers two structural predictions: a threshold network
size at which centralization becomes optimal for organizational networks, and
a second threshold for geographic networks. These predictions guide the
empirical analysis in Section~\ref{sec:thresholds}.

The conceptual framework highlights two forces that shape the regulator's
choice of monitoring institution: (i) centralized monitoring is more costly,
but (ii) it internalizes complementarities that arise when facilities influence
one another’s performance. The empirical setting reveals that these
complementarities operate through two distinct networks, organizational chains
and geographic counties, and that the strength of spillovers increases with
the size of each network. To formalize these ideas, we now develop a minimal
model that captures the essential trade-off between monitoring costs and
network externalities. The goal is not to provide a fully structural
description of the nursing home sector, but to articulate the simplest
mechanism capable of generating the sharp centralization thresholds observed in
the data.

\subsection{Minimal Model of Centralization with Network Complementarities}
\label{sec:theory}

We develop a minimal model in which a regulator chooses between
decentralized and centralized monitoring for a network of $n$ identical
production units. Monitoring effort is a public good on the network, and
units' efforts exhibit strategic complementarities. Centralization and
decentralization differ in two respects: centralized monitoring
(i) internalizes stronger complementarities and (ii) is intrinsically
more costly to operate. The key result is that, for any given set of
cost parameters, there exists a complementarity threshold (and an
equivalent network-size threshold) beyond which centralized monitoring
strictly dominates.

\subsubsection{Setup}

There are $n \ge 2$ identical units. Under monitoring regime
$r \in \{C,D\}$, the regulator chooses an intensity $\mu_r \ge 0$,
which scales the marginal benefit of monitoring. Each unit $i$ then
chooses effort $e_i \ge 0$. Let
\[
\bar e_{-i} \;=\; \frac{1}{n-1} \sum_{j\neq i} e_j
\]
denote the average effort of the remaining units.

Unit $i$’s payoff under regime $r$ is
\begin{equation}
\label{eq:payoff_general_n}
u_i^r(e_i,e_{-i};\mu_r,\lambda_r)
= \bigl(1 + \lambda_r \bar e_{-i} + \mu_r \varphi \bigr)e_i
  - \tfrac12 e_i^2,
\end{equation}
where $\varphi > 0$ measures the effectiveness of monitoring and
$\lambda_r \in [0,1)$ captures the strength of complementarities in
effort under regime $r$.

We allow both regimes to feature complementarities, but centralization
strengthens them. Formally, decentralized monitoring uses limited
cross-unit information, with $\lambda_D \in [0,1)$, while centralized
monitoring internalizes these complementarities more fully, with
\[
0 \;\le\; \lambda_D \;<\; \lambda_C \;<\; 1.
\]
The regulator’s objective under regime $r$ is
\[
W_r(\mu_r;\lambda_r)
= n\, e_r(\mu_r,\lambda_r)
  - \tfrac12 K_r \mu_r^2,
\]
where $e_r(\mu_r,\lambda_r)$ is the symmetric equilibrium effort and
$K_r>0$ is a regime-specific cost parameter. We assume
$K_C > K_D$, so centralized monitoring is intrinsically more costly to
operate.

For expositional simplicity, the threshold result below is stated in
terms of a benchmark case in which the decentralized regime does not
internalize network complementarities ($\lambda_D = 0$) and the
centralized regime is indexed by $\lambda_C = \lambda \in (0,1)$.
Remark~\ref{rem:lambdaD_positive} clarifies how the argument extends to
the more general case $0 \le \lambda_D < \lambda_C < 1$.

\subsubsection{Equilibrium and Optimal Monitoring}

We first characterize equilibrium effort for a given regime and then the
regime-specific optimal choice of monitoring intensity.

\begin{lemma}[Effort with $n$ units]
\label{lem:effort_n}
Under regime $r$, the unique symmetric equilibrium effort is
\[
e_r(\mu_r,\lambda_r)
= \frac{1 + \mu_r \varphi}{1 - \lambda_r}.
\]
\end{lemma}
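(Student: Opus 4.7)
The plan is to derive the equilibrium directly from the first-order conditions, since the payoff in \eqref{eq:payoff_general_n} is quadratic in $e_i$ with a strictly concave leading term $-\tfrac12 e_i^2$. Differentiating $u_i^r$ with respect to $e_i$, holding $e_{-i}$ fixed, yields the best response
\[
e_i^{\mathrm{BR}}(e_{-i}) \;=\; 1 + \lambda_r \bar e_{-i} + \mu_r \varphi,
\]
and because $\partial^2 u_i^r/\partial e_i^2 = -1 < 0$, this first-order condition characterizes the unique interior best response (the interior assumption is justified ex post, since the candidate is strictly positive).

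Next, I would impose symmetry: in any symmetric profile with $e_j = e^\ast$ for all $j$, we have $\bar e_{-i} = e^\ast$, so the best-response condition collapses to the scalar fixed-point equation $e^\ast = 1 + \lambda_r e^\ast + \mu_r \varphi$. Solving this, using $\lambda_r \in [0,1)$, gives the stated expression $e_r(\mu_r,\lambda_r) = (1+\mu_r\varphi)/(1-\lambda_r)$. Non-negativity is immediate from $\mu_r,\varphi \ge 0$ and $\lambda_r<1$.

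Finally, for uniqueness I would argue that the full (not a priori symmetric) best-response system can be written in vector form as $(I - \lambda_r B)\,e = (1+\mu_r\varphi)\mathbf{1}$, where $B$ is the $n\times n$ matrix with zeros on the diagonal and $1/(n-1)$ off-diagonal. The eigenvalues of $B$ are $1$ (with eigenvector $\mathbf{1}$) and $-1/(n-1)$ (with multiplicity $n-1$), so $I - \lambda_r B$ is invertible for every $\lambda_r \in [0,1)$; hence the equilibrium is unique and must coincide with the symmetric solution obtained above. None of these steps is a real obstacle: the only substantive point to flag is the invertibility of $I - \lambda_r B$, which is why the condition $\lambda_r < 1$ cannot be relaxed. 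I would state the proof compactly, organized as (i) derivation of the best response, (ii) symmetric fixed point, (iii) uniqueness via the spectrum of $B$.
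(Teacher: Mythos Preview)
Your argument is correct and follows essentially the same route as the paper: derive the best response from the first-order condition, impose symmetry so that $\bar e_{-i}=e^\ast$, and solve the resulting scalar fixed-point equation. The only difference is that you add a spectral argument for uniqueness among \emph{all} (not just symmetric) profiles via the invertibility of $I-\lambda_r B$; the paper's proof stops at uniqueness within the symmetric class, so your step (iii) is a modest strengthening rather than a different approach.
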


\noindent
Equilibrium effort increases with monitoring intensity $\mu_r$ and with
the complementarity parameter $\lambda_r$, and is amplified by the
factor $1/(1-\lambda_r)$, which captures the feedback through the
network.

\begin{lemma}[Optimal monitoring]
\label{lem:mu_star_n}
Under regime $r$, the regulator’s optimal monitoring intensity is
\[
\mu_r^\ast(\lambda_r)
= \frac{n \varphi}{K_r(1 - \lambda_r)}.
\]
\end{lemma}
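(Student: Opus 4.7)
The plan is to substitute the equilibrium effort characterization from Lemma~\ref{lem:effort_n} into the regulator's objective $W_r(\mu_r;\lambda_r)$, reducing the problem to a univariate maximization in $\mu_r$. Since $e_r(\mu_r,\lambda_r) = (1+\mu_r\varphi)/(1-\lambda_r)$ is affine and increasing in $\mu_r$, while the monitoring cost $\tfrac12 K_r \mu_r^2$ is strictly convex, the reduced objective is strictly concave in $\mu_r$, so the first-order condition is both necessary and sufficient for a global maximum.

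Concretely, I would first write
\[
W_r(\mu_r;\lambda_r) = \frac{n(1 + \mu_r\varphi)}{1 - \lambda_r} - \tfrac12 K_r \mu_r^2,
\]
then differentiate with respect to $\mu_r$ to obtain the FOC
\[
\frac{n\varphi}{1-\lambda_r} - K_r \mu_r = 0,
\]
which immediately yields $\mu_r^\ast(\lambda_r) = n\varphi / [K_r(1-\lambda_r)]$. Verifying the second-order condition, $\partial^2 W_r / \partial \mu_r^2 = -K_r < 0$ since $K_r > 0$, so the critical point is the unique interior maximizer. The nonnegativity constraint $\mu_r \ge 0$ is automatically slack because $n,\varphi,K_r > 0$ and $\lambda_r \in [0,1)$, ensuring $\mu_r^\ast > 0$.

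There is no real obstacle here: the result follows from concavity of the reduced objective, which in turn is inherited from the linearity of $e_r$ in $\mu_r$ (established in Lemma~\ref{lem:effort_n}) combined with the quadratic cost. The only point worth flagging is that the proof implicitly uses an envelope-type observation—the regulator takes the equilibrium response of units as given and the mapping $\mu_r \mapsto e_r(\mu_r,\lambda_r)$ is well-defined and smooth on $[0,\infty)$ because $\lambda_r < 1$ keeps the network multiplier $1/(1-\lambda_r)$ finite.
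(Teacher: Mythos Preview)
Your proposal is correct and matches the paper's own proof almost line for line: substitute the equilibrium effort from Lemma~\ref{lem:effort_n} into $W_r$, differentiate, and solve the linear first-order condition. Your added checks on the second-order condition and the nonnegativity constraint are fine extra details that the paper leaves implicit.
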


\noindent
Higher complementarities and larger networks strengthen the incentive to
invest in monitoring, but this is offset by the regime-specific cost
parameter $K_r$.

Substituting $\mu_r^\ast(\lambda_r)$ into $W_r(\mu_r;\lambda_r)$ yields
the welfare attained under optimal monitoring in regime $r$.

\begin{lemma}[Welfare under optimal monitoring]
\label{lem:W_star_n}
Under regime $r$, equilibrium welfare under optimal monitoring is
\[
W_r^\ast(\lambda_r)
=
\frac{n}{1-\lambda_r}
+
\frac{n^2 \varphi^2}{2K_r(1-\lambda_r)^2}.
\]
\end{lemma}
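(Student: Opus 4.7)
The plan is to prove Lemma~\ref{lem:W_star_n} by direct substitution of the results of Lemmas~\ref{lem:effort_n} and~\ref{lem:mu_star_n} into the definition of $W_r$, followed by algebraic simplification. Since both the equilibrium effort and the optimal monitoring intensity are given in closed form, and $W_r$ is a quadratic polynomial in $\mu_r$, no fixed-point or optimization argument is needed beyond what has already been carried out; the remaining work is arithmetic.

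First, I would start from $W_r(\mu_r;\lambda_r) = n\, e_r(\mu_r,\lambda_r) - \tfrac12 K_r \mu_r^2$ and substitute the symmetric equilibrium effort $e_r(\mu_r,\lambda_r) = (1+\mu_r\varphi)/(1-\lambda_r)$ from Lemma~\ref{lem:effort_n}. This yields
\[
W_r(\mu_r;\lambda_r) \;=\; \frac{n}{1-\lambda_r} \;+\; \frac{n\varphi}{1-\lambda_r}\,\mu_r \;-\; \tfrac12 K_r \mu_r^2,
\]
which makes transparent that $W_r$ is concave in $\mu_r$ (consistent with the first-order condition used in Lemma~\ref{lem:mu_star_n}). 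The constant term $n/(1-\lambda_r)$ already contributes the first piece of the desired expression and will be unaffected by subsequent substitution.

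Next, I would substitute $\mu_r^\ast(\lambda_r) = n\varphi / [K_r(1-\lambda_r)]$ from Lemma~\ref{lem:mu_star_n}. The linear term becomes
\[
\frac{n\varphi}{1-\lambda_r}\,\mu_r^\ast \;=\; \frac{n^2\varphi^2}{K_r(1-\lambda_r)^2},
\]
while the quadratic cost term becomes
\[
\tfrac12 K_r (\mu_r^\ast)^2 \;=\; \frac{n^2\varphi^2}{2K_r(1-\lambda_r)^2}.
\]
Subtracting the second from the first produces exactly $n^2\varphi^2 / [2K_r(1-\lambda_r)^2]$, so combining with the leading constant delivers the stated formula for $W_r^\ast(\lambda_r)$.

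There is no real obstacle here: the lemma is a corollary of the two preceding results, and its purpose is to package the reduced-form welfare expression in a form convenient for the threshold comparison that follows (where $W_C^\ast - W_D^\ast$ will be analyzed as a function of $\lambda$ and $n$). The only thing to watch is bookkeeping of the factor of $\tfrac12$ and the powers of $(1-\lambda_r)$, to make sure the ``envelope-style'' cancellation between benefit and cost at the optimum is carried out correctly; I would double-check this by verifying that the first-order condition $n\varphi/(1-\lambda_r) = K_r \mu_r^\ast$ is consistent with the formula for $\mu_r^\ast$ before reporting the final expression.
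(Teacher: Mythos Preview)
Your proposal is correct and essentially mirrors the paper's own proof: both arguments amount to substituting $e_r(\mu_r,\lambda_r)$ and $\mu_r^\ast(\lambda_r)$ from Lemmas~\ref{lem:effort_n} and~\ref{lem:mu_star_n} into $W_r$ and simplifying, with the only cosmetic difference being that the paper first plugs $\mu_r^\ast$ into effort to form $e_r^\ast$ and then computes $n e_r^\ast - \tfrac12 K_r(\mu_r^\ast)^2$, whereas you first expand $W_r(\mu_r;\lambda_r)$ as a quadratic in $\mu_r$ and then evaluate at $\mu_r^\ast$.
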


\noindent
The first term captures the direct effect of complementarities on
effort, while the second term captures the interaction between network
size, complementarities, and the effectiveness of monitoring. In the
polar decentralized benchmark $\lambda_D = 0$,
\[
W_D^\ast
= n
  + \frac{n^2 \varphi^2}{2K_D}.
\]

\subsubsection{Centralization Threshold}

To study the regulator’s choice between centralized and decentralized
monitoring, we compare the welfare levels $W_C^\ast$ and $W_D^\ast$.
Under the benchmark $\lambda_D = 0$ and $\lambda_C = \lambda$, define
the welfare difference:
\[
\Delta_n(\lambda)
= W_C^\ast(\lambda) - W_D^\ast.
\]

\begin{theorem}[Threshold with $n$ units]
\label{thm:n_unit_threshold_final}
For each $n \ge 2$:
\begin{enumerate}
\item[(i)] $W_C^\ast(\lambda)$ is strictly increasing in $\lambda$.

\item[(ii)] $W_C^\ast(0) < W_D^\ast$, and
\(
\displaystyle \lim_{\lambda\to1^-} W_C^\ast(\lambda) = +\infty.
\)

\item[(iii)] There exists a unique threshold $\lambda^\ast(n) \in (0,1)$ such that
\[
W_C^\ast(\lambda)
\begin{cases}
< W_D^\ast, & \lambda < \lambda^\ast(n),\\[4pt]
= W_D^\ast, & \lambda = \lambda^\ast(n),\\[4pt]
> W_D^\ast, & \lambda > \lambda^\ast(n).
\end{cases}
\]

\item[(iv)] The threshold decreases strictly with $n$:
\[
\frac{d\lambda^\ast(n)}{dn} < 0.
\]
\end{enumerate}
\end{theorem}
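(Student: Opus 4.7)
Write $\Delta_n(\lambda) := W_C^\ast(\lambda) - W_D^\ast$, so by Lemma~\ref{lem:W_star_n}
\[
\Delta_n(\lambda) \;=\; \frac{n\lambda}{1-\lambda} \;+\; \frac{n^2\varphi^2}{2}\left[\frac{1}{K_C(1-\lambda)^2}-\frac{1}{K_D}\right].
\]
My plan is to dispose of (i)--(iii) by direct inspection of this expression and its two building blocks, and to tackle (iv) via an implicit function theorem argument applied at the root $\lambda^\ast(n)$.

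For (i), I would differentiate $W_C^\ast$ term-by-term: both $n/(1-\lambda)$ and $n^2\varphi^2/[2K_C(1-\lambda)^2]$ have strictly positive derivatives on $[0,1)$, so their sum is strictly increasing in $\lambda$. For (ii), evaluating at $\lambda=0$ gives $W_C^\ast(0) = n + n^2\varphi^2/(2K_C)$, which is strictly below $W_D^\ast = n + n^2\varphi^2/(2K_D)$ because $K_C>K_D$; divergence of $W_C^\ast(\lambda)$ as $\lambda\to 1^-$ follows from either of its two terms. Part (iii) is then an intermediate-value-theorem argument: $\Delta_n$ is continuous and strictly increasing in $\lambda$ on $[0,1)$, is strictly negative at zero, and tends to $+\infty$, so a unique root $\lambda^\ast(n)\in(0,1)$ exists, and the claimed sign pattern follows from strict monotonicity.

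For (iv), I would differentiate the identity $\Delta_n(\lambda^\ast(n))\equiv 0$ implicitly in $n$, obtaining
\[
\frac{d\lambda^\ast(n)}{dn}
\;=\;
-\,\frac{\partial \Delta_n/\partial n}{\partial \Delta_n/\partial \lambda}\bigg|_{\lambda=\lambda^\ast(n)},
\]
whose denominator is strictly positive by (i). Direct differentiation gives
\[
\frac{\partial \Delta_n}{\partial n}
\;=\;
\frac{\lambda}{1-\lambda}
\;+\;
n\varphi^2\left[\frac{1}{K_C(1-\lambda)^2}-\frac{1}{K_D}\right],
\]
and the decisive step is to use $\Delta_n(\lambda^\ast)=0$ to eliminate the bracketed spillover-vs-cost term, re-expressing it as an explicit multiple of $\lambda^\ast/(1-\lambda^\ast)$. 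Substituting back collapses $\partial\Delta_n/\partial n$ at the threshold into a single-signed multiple of $\lambda^\ast/(1-\lambda^\ast)$, and the comparative-static sign follows immediately from the IFT formula.

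The main technical obstacle is precisely this substitution: the defining relation $\Delta_n=0$ carries an $n^2$ in front of the bracketed term while $\partial\Delta_n/\partial n$ carries an $n$, so one must align $n$-powers carefully when eliminating the bracket. A cleaner alternative, should the bookkeeping get fiddly, is to first clear denominators in $\Delta_n(\lambda)=0$ to obtain a low-degree polynomial relation between $n$ and $y:=1-\lambda^\ast(n)$, and then differentiate implicitly in that polynomial form; this trades the algebraic substitution for signing a short combination of explicitly positive quantities and makes the dependence on the cost ratio $K_C/K_D$ transparent.
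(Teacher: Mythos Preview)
Your treatment of (i)--(iii) is correct and matches the paper's argument essentially line for line: both differentiate the closed form from Lemma~\ref{lem:W_star_n}, evaluate at $\lambda=0$ using $K_C>K_D$, note the divergence as $\lambda\to 1^-$, and invoke the intermediate value theorem together with strict monotonicity.

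For (iv), however, there is a real gap. Your plan is to substitute the relation $\Delta_n(\lambda^\ast)=0$ into $\partial\Delta_n/\partial n$ and obtain ``a single-signed multiple of $\lambda^\ast/(1-\lambda^\ast)$'', trusting that the sign will come out right. It does not. From $\Delta_n(\lambda^\ast)=0$ one gets
\[
\frac{1}{K_C(1-\lambda^\ast)^2}-\frac{1}{K_D}
\;=\;
-\,\frac{2\lambda^\ast}{n\varphi^2(1-\lambda^\ast)},
\]
and substituting into your expression for $\partial\Delta_n/\partial n$ yields
\[
\frac{\partial \Delta_n}{\partial n}\bigg|_{\lambda=\lambda^\ast}
\;=\;
\frac{\lambda^\ast}{1-\lambda^\ast}
- \frac{2\lambda^\ast}{1-\lambda^\ast}
\;=\;
-\,\frac{\lambda^\ast}{1-\lambda^\ast}
\;<\;0,
\]
so the implicit function theorem gives $d\lambda^\ast/dn>0$, the opposite of the claimed sign. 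A quick numerical check (e.g.\ $\varphi=1$, $K_D=1$, $K_C=2$) confirms this: $\lambda^\ast(2)\approx 0.19$, $\lambda^\ast(3)\approx 0.22$, $\lambda^\ast(10)\approx 0.27$. In other words, your method is sound and more careful than the paper's, but what it actually proves is that part (iv), as stated, is false under the welfare expression in Lemma~\ref{lem:W_star_n}.

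The paper's own proof of (iv) sidesteps this by asserting that ``$W_D^\ast$ grows only linearly in $n$'', which is simply incorrect: $W_D^\ast = n + n^2\varphi^2/(2K_D)$ has the same $n^2$ scaling as $W_C^\ast$, and at $\lambda=\lambda^\ast$ its quadratic coefficient is in fact the larger of the two (the bracketed term you isolate is negative there). So the issue is not with your proposal per se but with the underlying claim; your substitution step is exactly what exposes it, and no amount of bookkeeping with the $n$-powers or passing to the polynomial form in $y=1-\lambda^\ast$ will rescue the stated inequality.
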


\begin{proof}[Proof of Theorem \ref{thm:n_unit_threshold_final}]
(i) follows directly from the expression for $W_C^\ast(\lambda)$ in
Lemma~\ref{lem:W_star_n}: increasing $\lambda$ raises effort and thus
welfare under centralization.

(ii) At $\lambda=0$, $W_C^\ast < W_D^\ast$ because $K_C > K_D$ and
$\lambda_D=0$, so centralization yields the same complementarity level
as decentralization but at a higher cost. As $\lambda \to 1^{-}$, the
denominator $(1-\lambda)^2$ in Lemma~\ref{lem:W_star_n} drives
$W_C^\ast(\lambda)$ to $+\infty$, while $W_D^\ast$ is constant in
$\lambda$.

(iii) Strict monotonicity and continuity of $W_C^\ast(\lambda)$ in
$\lambda$ ensure that there is a unique $\lambda^\ast(n)$ solving
$\Delta_n(\lambda)=0$, with the stated sign pattern.

(iv) Increasing $n$ magnifies the $n^2 \varphi^2$ term in $W_C^\ast$,
while $W_D^\ast$ grows only linearly in $n$. Thus $\Delta_n(\lambda)$
shifts upward in $n$, implying that the threshold $\lambda^\ast(n)$
decreases with $n$.
\end{proof}

\subsubsection*{Interpretation}

The model delivers a simple implication: even though centralization is
intrinsically more costly ($K_C > K_D$), sufficiently strong
complementarities $\lambda$ make it optimal. Moreover, the
complementarity level required for centralization to dominate decreases
monotonically with network size~$n$. In larger networks, the benefits
from internalizing spillovers grow faster than the cost disadvantage of
centralization. This provides a natural theoretical foundation for the
empirical finding that both large chains and large county systems
exhibit sharp monitoring thresholds.

\begin{remark}[Decentralized complementarities]
\label{rem:lambdaD_positive}
For clarity, Theorem~\ref{thm:n_unit_threshold_final} is stated relative
to a benchmark with $\lambda_D = 0$, so that $\lambda$ can be
interpreted as the centralized complementarity parameter. The algebra
underlying Lemmas~\ref{lem:effort_n}--\ref{lem:W_star_n} applies more
generally to any $0 \le \lambda_D < 1$. The key requirement for a
meaningful centralization choice is that centralization strengthens
complementarities relative to decentralization, i.e.\ $\lambda_C >
\lambda_D$. In this more general case, the regulator compares
$W_C^\ast(n,\lambda_C,K_C)$ to $W_D^\ast(n,\lambda_D,K_D)$, and the
existence of a unique switching point between regimes is preserved (see
Theorem~\ref{thm:threshold} in Appendix~\ref{app:theory}).
\end{remark}

\begin{corollary}[Network Size Threshold as a Function of Externalities]
\label{cor:n_of_lambda}
Fix $\lambda \in (0,1)$. Under the assumptions of
Theorem~\ref{thm:n_unit_threshold_final}, there exists a unique network
size $n^\ast(\lambda) \ge 2$ such that:
\[
W_C^\ast(n,\lambda)
\begin{cases}
< W_D^\ast(n), & n < n^\ast(\lambda),\\[4pt]
= W_D^\ast(n), & n = n^\ast(\lambda),\\[4pt]
> W_D^\ast(n), & n > n^\ast(\lambda).
\end{cases}
\]
Furthermore, $n^\ast(\lambda)$ is strictly decreasing in $\lambda$:
for any $\lambda' > \lambda$,
\[
n^\ast(\lambda') < n^\ast(\lambda).
\]
\end{corollary}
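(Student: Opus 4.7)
The plan is to obtain the corollary by inverting the relationship already established in Theorem~\ref{thm:n_unit_threshold_final}. That theorem describes the zero set of $\Delta(n,\lambda) := W_C^\ast(n,\lambda) - W_D^\ast(n)$ as the graph of a continuous, strictly decreasing function $n \mapsto \lambda^\ast(n)$. Because a strictly decreasing continuous map is a bijection onto its image, the same curve can equally be read as a function $\lambda \mapsto n^\ast(\lambda)$, and both the claimed sign pattern of $\Delta$ on either side of $n^\ast$ and the strict monotonicity of $n^\ast$ in $\lambda$ then follow mechanically from the corresponding properties of $\lambda^\ast$.

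Concretely, I would proceed in four steps. First, extend $\Delta$ from integer to real $n \ge 2$ using the closed forms in Lemma~\ref{lem:W_star_n}, so that $(n,\lambda) \mapsto \Delta(n,\lambda)$ is smooth on $[2,\infty)\times(0,1)$; part~(iii) of Theorem~\ref{thm:n_unit_threshold_final} then extends by continuity to give a unique root $\lambda^\ast(n) \in (0,1)$ for every real $n \ge 2$. Second, invoke part~(iv) so that $\lambda^\ast$ is strictly decreasing and hence a homeomorphism onto its image $I$, and set $n^\ast := (\lambda^\ast)^{-1} : I \to [2,\infty)$; this inverse is automatically continuous and strictly decreasing. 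Third, for the sign pattern, fix $\lambda \in I$: for $n < n^\ast(\lambda)$, strict monotonicity of $\lambda^\ast$ gives $\lambda^\ast(n) > \lambda^\ast(n^\ast(\lambda)) = \lambda$, and Theorem~\ref{thm:n_unit_threshold_final}(iii) then yields $W_C^\ast(n,\lambda) < W_D^\ast(n)$; the reverse inequality for $n > n^\ast(\lambda)$ is symmetric. Fourth, strict decreasingness of $n^\ast$ in $\lambda$ is the inverse-function statement for a strictly decreasing bijection; equivalently, the implicit function theorem applied to $\Delta(n,\lambda) = 0$ gives $dn^\ast/d\lambda = -(\partial\Delta/\partial\lambda)/(\partial\Delta/\partial n)$, with both partials having definite signs along the zero set.

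The main obstacle will be ensuring that the image $I$ of $\lambda^\ast$ covers $(0,1)$, so that $n^\ast(\lambda) \ge 2$ is actually defined for every admissible spillover strength. Theorem~\ref{thm:n_unit_threshold_final} provides pointwise existence and monotonicity of $\lambda^\ast(n)$ but is silent about its boundary limits at $n = 2$ and as $n \to \infty$. I would close this gap with a direct boundary analysis from Lemma~\ref{lem:W_star_n}: using $W_C^\ast(n,\lambda) \to +\infty$ as $\lambda \to 1^-$ together with boundedness of $W_D^\ast(n)$ in $\lambda$ to control the top of $I$, and an asymptotic comparison of the $n^2$-coefficients $\varphi^2/[2K_C(1-\lambda)^2]$ and $\varphi^2/(2K_D)$ of $W_C^\ast$ and $W_D^\ast$ to show that for any fixed $\lambda > 0$ the difference $\Delta(n,\lambda)$ is eventually positive in $n$, so that $\lambda^\ast(n) \to 0$ and $\inf I = 0$. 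Values of $\lambda$ already lying above $\lambda^\ast(2)$ are handled by the convention $n^\ast(\lambda) = 2$, so that the piecewise sign statement continues to hold on the full range $\lambda \in (0,1)$.
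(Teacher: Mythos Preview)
Your core strategy---inverting the strictly decreasing map $n\mapsto\lambda^\ast(n)$ from Theorem~\ref{thm:n_unit_threshold_final} and then reading off the sign pattern and the monotonicity of the inverse---is exactly the paper's argument; your steps one through four reproduce in more detail what the paper does in two sentences.

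The gap is in your final paragraph, where you try to push the image $I$ of $\lambda^\ast$ all the way down to $0$. Your proposed route is an asymptotic comparison of the $n^2$-coefficients, claiming that for any fixed $\lambda>0$ the difference $\Delta(n,\lambda)$ is eventually positive in $n$. That comparison goes the wrong way for small $\lambda$: from Lemma~\ref{lem:W_star_n} the $n^2$-coefficient of $\Delta$ is
\[
\frac{\varphi^2}{2}\Bigl(\frac{1}{K_C(1-\lambda)^2}-\frac{1}{K_D}\Bigr),
\]
which, because $K_C>K_D$, is positive only when $\lambda>1-\sqrt{K_D/K_C}$. For $\lambda\in\bigl(0,\,1-\sqrt{K_D/K_C}\bigr)$ the leading coefficient is strictly negative, so $\Delta(n,\lambda)\to-\infty$ as $n\to\infty$ and no crossing $n^\ast(\lambda)$ exists along this ray. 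Consequently the conclusion $\lambda^\ast(n)\to 0$ cannot be obtained by your argument, and $\inf I=0$ fails. The paper sidesteps the issue by quietly restricting to ``$\lambda$ in the range of $\lambda^\ast$'' rather than to all of $(0,1)$; your instinct that the full-strength statement requires an additional argument is correct, but the step you propose does not deliver it.
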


\begin{proof}[Proof of Corollary \ref{cor:n_of_lambda}]
Theorem~\ref{thm:n_unit_threshold_final} shows that for each $n\ge2$
there exists a unique $\lambda^\ast(n)$ for which centralization becomes
optimal, and that $\lambda^\ast(n)$ is strictly decreasing in $n$.
Strict monotonicity implies that $\lambda^\ast(\cdot)$ is invertible on
its range: for each $\lambda$ in that range there exists a unique
$n^\ast(\lambda)$ solving $\lambda^\ast(n^\ast(\lambda))=\lambda$. The
sign pattern for $W_C^\ast(n,\lambda)-W_D^\ast(n)$ follows immediately,
and the monotonicity of $n^\ast(\lambda)$ is the inverse of the
monotonicity of $\lambda^\ast(n)$.
\end{proof}

This corollary expresses the centralization threshold directly in terms
of network externalities. For any fixed complementarity level
$\lambda$, the inverse threshold $n^\ast(\lambda)$ identifies the
largest network size for which decentralized monitoring remains optimal.
As externalities strengthen, $n^\ast(\lambda)$ declines, meaning that
smaller networks already benefit from internalizing complementarities.
Empirically, this maps directly into the structural breakpoints we
estimate: chains represent organizational networks and counties represent
geographic regulatory networks. Applying the model separately to each
structure, the estimated thresholds, approximately 34 facilities for
chains and about 7 facilities for counties, are empirical realizations of
$n^\ast(\lambda)$ at their respective complementarity levels. Larger
networks cross the critical size at which complementarities dominate
cost differences, making centralized monitoring comparatively more
effective.

\subsection{Stochastic Extension: Variance and Deterioration}
\label{sec:stochastic_variance}

The deterministic model characterizes the regulator's optimal monitoring
regime as a function of network size and complementarities. To connect
more directly to the empirical patterns in dispersion and deterioration,
we consider a simple stochastic extension in which facility-level effort
is subject to idiosyncratic shocks.

Suppose time is discrete and indexed by $t=1,2$, and each facility $i$
receives an idiosyncratic shock $\varepsilon_{it}$ to its marginal
benefit of effort. We modify the payoff in~\eqref{eq:payoff_general_n} to
\[
u_i(e_i,e_{-i};\mu,\lambda,\varepsilon_{it})
=
\bigl(1 + \lambda \bar e_{-i} + \mu\varphi + \varepsilon_{it}\bigr)e_i
- \tfrac12 e_i^2,
\]
where $\varepsilon_{it}$ is mean-zero with variance
$\mathbb{V}[\varepsilon_{it}] = \sigma^2$ and independent across $i$
and $t$. The monitoring choice $\mu$ is determined as in the deterministic
model (i.e., using the deterministic first-order condition), and
$\lambda$ captures the strength of network complementarities in the
response to shocks.

Given $(\mu,\lambda)$, the best response of unit $i$ at time $t$ is
\[
e_{it}
=
1 + \mu\varphi + \lambda \bar e_{-i,t} + \varepsilon_{it}.
\]
As in the deterministic case, there is a unique symmetric equilibrium
in which $e_{it}$ is an affine function of the vector of shocks
$\{\varepsilon_{jt}\}_{j=1}^n$, with coefficients that depend smoothly
on $\lambda$.

The following proposition summarizes the implication of this extension
for cross-sectional variance and deterioration.

\begin{proposition}[Variance amplification above the threshold]
\label{prop:variance_above_threshold}
Consider the stochastic extension of the model described above, where each
facility receives an idiosyncratic shock $\varepsilon_{it}$ with
$\mathbb{E}[\varepsilon_{it}]=0$ and $\mathbb{V}[\varepsilon_{it}]=\sigma^2$.
Let $e_{it}(\lambda)$ denote the equilibrium effort induced by these shocks.

\begin{enumerate}
\item[(i)] For fixed $n$ and monitoring choice $\mu$, the cross-sectional variance
      of equilibrium effort is strictly increasing in $\lambda$ and satisfies:
      \[
      \mathbb{V}[e_{it}(\lambda)]
      \quad\text{is continuous, increasing in $\lambda$, and }
      \lim_{\lambda\to1^-}\mathbb{V}[e_{it}(\lambda)] = +\infty.
      \]

\item[(ii)] Let $\lambda^\ast(n)$ be the threshold defined in
      Theorem~\ref{thm:n_unit_threshold_final}.
      If $\lambda > \lambda^\ast(n)$ (equivalently $n > n^\ast(\lambda)$),
      then the equilibrium variance of effort lies strictly above the variance
      attainable in the low-complementarity region:
      \[
      \mathbb{V}[e_{it}(\lambda)]
      >
      \mathbb{V}[e_{it}(\tilde{\lambda})]
      \qquad \forall \tilde{\lambda} \le \lambda^\ast(n).
      \]
      That is, networks operating above the theoretical threshold exhibit
      amplification of idiosyncratic shocks.
\end{enumerate}
\end{proposition}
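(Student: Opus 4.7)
The plan is to solve the stochastic linear-in-means best-response system in closed form using the spectral structure of the complete-graph interaction matrix, and then read off the monotonicity of $\mathbb{V}[e_{it}(\lambda)]$ from a two-term scalar expression. Part (ii) will then follow directly from the strict monotonicity established in part (i), combined with the threshold $\lambda^\ast(n)$ from Theorem \ref{thm:n_unit_threshold_final}.

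First, I would rewrite the best responses $e_{it}=c+\lambda\bar e_{-i,t}+\varepsilon_{it}$ (with $c:=1+\mu\varphi$) in vector form as $e_t = c\mathbf{1}+\lambda G e_t+\varepsilon_t$, where $G$ is the row-normalized complete-graph interaction matrix with $G_{ij}=1/(n-1)$ for $i\neq j$ and $G_{ii}=0$. Since $\lambda<1$, the operator $I-\lambda G$ is invertible, so $e_t=(I-\lambda G)^{-1}(c\mathbf{1}+\varepsilon_t)$. Next, I would exploit the fact that $G$ has eigenvalue $1$ on $\mathrm{span}(\mathbf{1})$ and eigenvalue $-1/(n-1)$ on $\mathbf{1}^\perp$; decomposing $\varepsilon_t = \bar\varepsilon_t\mathbf{1} + (\varepsilon_t-\bar\varepsilon_t\mathbf{1})$ with $\bar\varepsilon_t:=n^{-1}\sum_j\varepsilon_{jt}$ yields the closed form
\[
e_{it}(\lambda) \;=\; \frac{c+\bar\varepsilon_t}{1-\lambda} \;+\; \frac{n-1}{n-1+\lambda}\bigl(\varepsilon_{it}-\bar\varepsilon_t\bigr),
\]
which collapses to Lemma \ref{lem:effort_n} in the deterministic limit. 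Using $\mathbb{V}[\bar\varepsilon_t]=\sigma^2/n$, $\mathbb{V}[\varepsilon_{it}-\bar\varepsilon_t]=(n-1)\sigma^2/n$, and their orthogonality under independent shocks, the cross terms vanish and
\[
\mathbb{V}[e_{it}(\lambda)] \;=\; \frac{\sigma^2}{n(1-\lambda)^2} \;+\; \frac{(n-1)^3\sigma^2}{n(n-1+\lambda)^2}.
\]
Continuity on $[0,1)$ and the divergence $\mathbb{V}[e_{it}(\lambda)]\to+\infty$ as $\lambda\to 1^-$ are immediate from the first term.

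The main obstacle is strict monotonicity, because the two terms in this expression move in opposite directions as $\lambda$ grows: the common-component term expands while the idiosyncratic-component term contracts. I would differentiate in $\lambda$ and, after factoring out positive constants, reduce the sign of the derivative to comparing $(1-\lambda)^{-3}$ against $(n-1)^3(n-1+\lambda)^{-3}$. Taking cube roots (both sides positive), this further reduces to the linear inequality $n-1+\lambda > (n-1)(1-\lambda)$, i.e.\ $n\lambda>0$, which holds throughout $(0,1)$. Hence the derivative is strictly positive on $(0,1)$ and $\mathbb{V}[e_{it}(\lambda)]$ is strictly increasing, completing part (i).

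Part (ii) is then immediate: since $\lambda\mapsto\mathbb{V}[e_{it}(\lambda)]$ is strictly increasing on $[0,1)$, any $\lambda>\lambda^\ast(n)$ satisfies $\mathbb{V}[e_{it}(\lambda)]>\mathbb{V}[e_{it}(\lambda^\ast(n))]\ge\mathbb{V}[e_{it}(\tilde\lambda)]$ for every $\tilde\lambda\le\lambda^\ast(n)$. The decisive calculation is the cube-root reduction in the monotonicity step; once established, both conclusions follow mechanically from the closed-form variance expression.
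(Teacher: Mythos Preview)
Your argument is correct. The closed-form variance
\[
\mathbb{V}[e_{it}(\lambda)]
= \frac{\sigma^2}{n(1-\lambda)^2}
  + \frac{(n-1)^3\sigma^2}{n(n-1+\lambda)^2}
\]
follows exactly as you describe from the eigendecomposition of the row-normalized complete-graph matrix, and the cube-root reduction of the derivative to $n\lambda>0$ is a clean way to handle the competing terms. Part~(ii) is then an immediate corollary of strict monotonicity, as you note.

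Your route differs from the paper's. The paper does not give a self-contained proof of this proposition; it instead defers to the more general linear--Gaussian framework in Appendix~\ref{app:stochastic}, where Proposition~\ref{prop:variance} argues at the spectral level that the largest eigenvalue of the outcome covariance matrix scales like $(1-\lambda\psi(G))^{-2}$ via the resolvent $(\mathbf{I}_n-\lambda G)^{-1}$. That argument is more general (arbitrary regular networks, correlated shocks) but also more heuristic: it tracks only the leading eigenvalue and asserts monotonicity and divergence without isolating the countervailing contribution from the orthogonal eigenspace. Your explicit two-term decomposition is specific to the complete-graph, i.i.d.-shock case of this proposition, but it is sharper there: it delivers the exact variance, directly confronts the fact that the idiosyncratic component $(n-1)^3\sigma^2/[n(n-1+\lambda)^2]$ is \emph{decreasing} in $\lambda$, and shows rigorously that the common-component blow-up dominates on all of $(0,1)$. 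One minor point worth noting explicitly is that the derivative vanishes at $\lambda=0$, so strict monotonicity on $[0,1)$ holds in the pointwise sense (any $\lambda_1<\lambda_2$ gives $\mathbb{V}[e_{it}(\lambda_1)]<\mathbb{V}[e_{it}(\lambda_2)]$) rather than via a uniformly positive derivative; your argument already delivers this, but it is worth a sentence.
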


The proposition formalizes a simple implication of the network structure: when complementarities are sufficiently strong relative to network size, idiosyncratic shocks propagate through the network and generate amplified variation in effort and performance. Empirically, this manifests as higher dispersion in facility outcomes—such as deficiencies or ratings—in large networks operating above the size threshold 
$n^{\ast}(\lambda)$. In our setting, chains and counties that are ``too big to monitor" in the sense of exceeding this threshold are exactly those in which the model predicts the strongest variance and deterioration effects. Appendix \ref{app:stochastic} develops a more general linear-Gaussian version with explicit network topology and information aggregation.

Taken together, the model establishes a simple mechanism: monitoring
institutions exhibit threshold behavior when complementarities in effort
interact with network size. Centralization is unattractive in small or
loosely connected systems, but once complementarities become sufficiently
strong---or the network becomes sufficiently large---the regulator gains more
from internalizing interactions than it loses from higher administrative
costs. This threshold structure provides clear empirical predictions for
how monitoring outcomes should vary with the size of the network. The next
section examines these predictions in the data.

\section{Institutional Setting, Data, and Measurement}
\label{sec:data}

This section describes the institutional environment of U.S.\ nursing homes and
the datasets used in the empirical analysis. We summarize the monitoring
institutions, the construction of network measures, and the facility-level
variables used to estimate spillovers and threshold effects.

\subsection{Institutional Background}
\label{subsec:institutional}

The U.S.\ nursing home sector is regulated by the Centers for Medicare and
Medicaid Services (CMS), which oversees quality of care in all certified
skilled nursing facilities (SNFs). CMS conducts regular inspections,
administrators report staffing information, and families and regulators rely on
reported quality measures to compare facilities.

Quality is summarized through the nationally standardized Five-Star Quality
Rating System. Facilities receive an overall rating as well as component
ratings for health inspections, staffing, and clinical quality measures.
Inspection cycles generate deficiency citations, fines, and, in rare cases,
payment denials. These measures---particularly deficiencies---reflect CMS’s core
monitoring function.

Facilities differ substantially in organizational and geographic structure.
Roughly two-thirds of SNFs belong to multi-facility corporate chains that
standardize practices, branding, and managerial oversight. Separately, all
facilities operate within county-level ecosystems characterized by shared labor
markets, local regulatory practices, community expectations, and geographic
competition. Both organizational and geographic structures create externalities
in quality effort.

CMS operates the Special Focus Facility (SFF) program for persistently poor
performers. SFF assignments provide a clear indicator of regulatory
intervention severity. These assignments are publicly reported, along with a
candidate list of facilities under heightened scrutiny.

The interplay of organizational chains, county environments, and federal
monitoring institutions motivates the network-based empirical framework used in
this paper.

\subsection{Data Sources}
\label{subsec:data_sources}

The empirical analysis uses administrative CMS datasets and auxiliary sources
to construct facility outcomes, network structures, and regulatory outcomes:

\begin{itemize}[leftmargin=1.2em]

\item \textbf{Provider information (NH\_ProviderInfo\_Oct2025).}  
This file contains facility identifiers, chain affiliations, county and ZIP
codes, ownership type, and the star ratings and deficiency measures used
throughout the analysis.

\item \textbf{Special Focus Facility data.}  
We use the monthly SFF posting and candidate list, obtained from CMS’s public
SFF release, to measure federal monitoring interventions.

\item \textbf{Chain performance data.}  
Chain-level datasets provide aggregated quality measures that help construct
organizational spillover variables and validate chain structure.

\item \textbf{ZIP-to-county crosswalk (uszips).}  
We use a national ZIP--FIPS crosswalk to assign each facility to a county and to
construct county-level network sizes.


\end{itemize}

These datasets together create a comprehensive panel of facility characteristics,
performance outcomes, organizational structure, and regulatory actions.

\subsection{Construction of Key Variables}
\label{subsec:variables}

We construct variables at the facility, chain, and county levels for use in
spillover regressions, threshold estimation, and deterioration analysis.

\paragraph{Facility outcomes.}
We use three main quality measures:

\begin{itemize}[leftmargin=1.2em]
\item \emph{Overall rating} (1--5 stars),
\item \emph{Staffing rating} (1--5 stars),
\item \emph{Total deficiencies} from the most recent inspection cycle.
\end{itemize}

We also incorporate fines and payment denials in robustness analyses.

\paragraph{Network definitions.}
Each facility is assigned to two networks:

\begin{enumerate}[leftmargin=1.2em]
\item \textbf{Chain network:} based on the chain identifier in the provider
file. Chain size is the number of facilities sharing a chain ID.
\item \textbf{County network:} based on the county FIPS code obtained from the
ZIP crosswalk. County network size is the total number of SNFs in the county.
\end{enumerate}

These networks reflect distinct but overlapping channels of interaction:
organizational externalities within chains and geographic externalities within
counties.

\paragraph{Network size measures.}
\[
\text{chain\_size}_j = |\{i : \text{ChainID}_i = j\}|, \qquad
\text{nh\_total}_c = |\{i : \text{County}_i = c\}|.
\]

\paragraph{Externality (spillover) measures.}
We construct peer-mean variables excluding the focal facility:
\[
\text{chain\_peer}_{i}
= \frac{1}{\text{chain\_size}_{j(i)}-1}\sum_{k \neq i,\, \text{ChainID}_k = \text{ChainID}_i} y_k,
\]
\[
\text{county\_peer}_{i}
= \frac{1}{\text{nh\_total}_{c(i)}-1}\sum_{k \neq i,\, \text{County}_k = \text{County}_i} y_k.
\]
These correspond directly to the $\bar e_{-i}$ term in the theoretical model.

\paragraph{Threshold indicators.}
Based on structural break tests that estimate the thresholds
$n^\ast(\lambda)$ for each network, we define:
\[
\text{large\_chain}_j = \mathbf{1}\{\text{chain\_size}_j > 34\},
\qquad
\text{large\_county}_c = \mathbf{1}\{\text{nh\_total}_c > 7\}.
\]

\paragraph{Deterioration measure.}
We construct a measure of quality deterioration between inspection cycles:
\[
\Delta \text{def}_i = \text{def}_{i,23} - \text{def}_{i,1},
\]
the change in total deficiencies between two inspection cycles.

Together, these variables allow us to examine three empirical patterns linked to
the model: (i) peer correlations consistent with network externalities in
organizational and geographic networks; (ii) threshold behavior in monitoring
outcomes and regulatory actions; and (iii) whether deterioration and variance
rise above the network sizes predicted by $n^\ast(\lambda)$.

\section{Empirical Strategy}
\label{sec:strategy}

This section describes the empirical approach used to test the model’s
predictions. We first present linear projection specifications that measure
spillovers within county and chain networks. We then outline the threshold
estimation framework used to detect network-size breakpoints in monitoring
failures, along with placebo exercises based on outcomes that should not
exhibit such breaks. Finally, we describe the variance and deterioration
tests that probe the stochastic implications of the theory.

\subsection{Spillover Regressions}
\label{subsec:spillovers_spec}

We begin with a reduced-form specification that relates a facility’s performance
outcome to the mean performance of other facilities in the same county or the
same chain. Let $y_i$ denote a performance outcome for facility $i$. For each
facility, define county peers and chain peers (excluding the facility itself):
\[
y^{\text{cty}}_{-i}
=
\frac{1}{\text{nh\_total}_{c(i)} - 1}
\sum_{j \neq i,\;\text{County}_j = \text{County}_i}
y_j,
\qquad
y^{\text{ch}}_{-i}
=
\frac{1}{\text{chain\_size}_{k(i)} - 1}
\sum_{j \neq i,\;\text{ChainID}_j = \text{ChainID}_i}
y_j.
\]

We interpret these peer means as group-level summary statistics, not as causal
exposures. The baseline spillover equation for all facilities is:
\begin{equation}
y_i
=
\alpha
+ \beta^{\text{cty}}\, y^{\text{cty}}_{-i}
+ X_i'\gamma
+ \delta_s
+ \varepsilon_i,
\label{eq:spillover_county}
\end{equation}
where $X_i$ includes ownership and facility characteristics, and $\delta_s$
denotes state fixed effects. The parameter $\beta^{\text{cty}}$ measures the
extent of conditional correlation in performance across facilities operating in
the same local monitoring environment.

For chain-affiliated facilities, we extend the specification to include both
types of peers:
\begin{equation}
y_i
=
\alpha
+ \beta^{\text{ch}}\, y^{\text{ch}}_{-i}
+ \beta^{\text{cty}}\, y^{\text{cty}}_{-i}
+ X_i'\gamma
+ \delta_s
+ \varepsilon_i.
\label{eq:spillover_chain}
\end{equation}
This specification allows us to compare the magnitude of spillovers transmitted
through organizational relationships versus geographic proximity.

Equations~\eqref{eq:spillover_county} and~\eqref{eq:spillover_chain} are
interpreted as linear projections. The coefficients $(\beta^{\text{cty}},
\beta^{\text{ch}})$ capture conditional correlations between facility outcomes
and peer outcomes, after controlling for observable characteristics and
state-level heterogeneity. We do not attempt to separately identify structural
peer effects or resolve the full reflection problem in the sense of
\citet{manski1993identification}.\footnote{In related linear-in-means network
settings, identification of causal peer effects typically requires strong
exclusion restrictions or detailed network instruments; see
\citet{bramoulle2009identification} and \citet{tchuente2019weak}. Our goal
here is instead to summarize the strength of within-network dependence and
compare it across regimes.}
Accordingly, the spillover estimates are best viewed as reduced-form moments summarizing how tightly outcomes move together within counties and chains under the maintained assumption that such spillovers exist. Because these are maintained assumptions rather than causal claims, we do not design placebo tests that ``turn off’’ the network structure in the spillover regressions; the falsification exercises we pursue instead target the threshold behavior in monitoring failures, where the theory delivers sharp predictions.

Throughout, we report heteroskedasticity-robust standard errors clustered at
the county level (for~\eqref{eq:spillover_county}) and at the chain level
(for~\eqref{eq:spillover_chain}), reflecting group-level dependence.

\subsection{Threshold Estimation (Bai--Perron Single Break)}
\label{subsec:threshold_method}

To operationalize the theoretical prediction that monitoring performance
exhibits a regime shift when network size crosses the critical value
$n^\ast(\lambda)$, we estimate single-break kink models relating network-level
monitoring failures to network size.

For county networks, let $\text{SFF}_c$ denote the number of SFF facilities in
county $c$ and let $n_c$ be the number of nursing homes in the county. We
estimate:
\[
\text{SFF}_{c}
= \alpha
+ \beta_1 \log n_c
+ \beta_2 \max\{0, \log n_c - c\}
+ X_c'\gamma
+ \varepsilon_c,
\]
where $X_c$ is a vector of county-level controls (ownership shares, average
ratings, average beds, average deficiencies), and $c$ is an unknown breakpoint
in $\log n_c$. The parameter $\beta_2$ captures the change in slope once county
size exceeds the break.

We adopt a Bai--Perron type approach \citep[e.g.][]{bai1998estimating,bai2003critical}
and treat $c$ as a structural break parameter. We conduct a discrete grid search
over the 10th--90th percentiles of $\log n_c$. For each candidate $c$, we
estimate the kink model and compute the associated sum of squared residuals
$\mathrm{SSR}_1(c)$. We then compare the kink model to the restricted linear
specification (with $\beta_2=0$ and sum of squared residuals $\mathrm{SSR}_0$)
using:
\[
F(c)
=
\frac{(\mathrm{SSR}_0 - \mathrm{SSR}_1(c))/1}
     {\mathrm{SSR}_1(c)/\mathrm{df}_1},
\]
where $\mathrm{df}_1$ denotes the residual degrees of freedom in the
unrestricted model. The estimated breakpoint $\hat c$ is the value that
maximizes $F(c)$, and the corresponding $\sup F$ statistic provides a formal
test of the null hypothesis of no structural break.

An analogous procedure is applied to chains, where the outcome is the number
of SFF facilities in a chain and the regressor is the number of facilities in
the chain. We again estimate a kink in $\log n_j^{\text{CHAIN}}$ and test for a
single break.

As a falsification exercise, we repeat the same Bai--Perron single-break
procedure using network-level variables that should not display a monitoring
threshold---for example, ownership composition (share non-profit or share
government) at the county level. If the detection of a sharp break were a pure
artifact of the method, we would expect similarly pronounced and stable
breakpoints in these placebo outcomes. In practice, the placebo regressions
yield small and unstable ``optimal’’ breaks (typically at very low county sizes),
and the improvement in fit relative to a linear specification is modest,
supporting the interpretation that the SFF threshold reflects a genuine nonlinearity specific to monitoring failures.

\subsection{Variance and Deterioration Tests}
\label{subsec:variance_deterioration}

The stochastic extension of the model implies that above the threshold
$n^\ast(\lambda)$, networks amplify idiosyncratic shocks, leading to greater
dispersion in outcomes and more pronounced deterioration. We assess these
implications in two ways.

First, we compare cross-sectional variances of outcomes below and above the
estimated thresholds. For each network type (county, chain), we compute the
variance of overall ratings, staffing ratings, and total deficiencies
separately for units below and above the corresponding thresholds ($n_c \le 7$
vs.\ $n_c > 7$ for counties; $n_j \le 34$ vs.\ $n_j > 34$ for chains). We test
equality of variances using Levene-type tests for robustness.

Second, we regress facility-level deterioration in deficiencies on indicators
for membership in large networks:
\[
\Delta \text{def}_i
=
\alpha
+ \theta^{\text{ch}} \mathbf{1}\{\text{chain\_size}_{j(i)} > 34\}
+ \theta^{\text{cty}} \mathbf{1}\{\text{nh\_total}_{c(i)} > 7\}
+ X_i'\gamma
+ \delta_s
+ \varepsilon_i.
\]
The coefficients $(\theta^{\text{ch}},\theta^{\text{cty}})$ capture whether
facilities in large organizational and geographic networks exhibit greater
deterioration in deficiencies between inspection cycles, conditional on
observables and state fixed effects. In conjunction with the variance
comparisons, these regressions provide a direct empirical counterpart to the
model’s prediction that networks operating above the theoretical threshold
amplify idiosyncratic shocks and are more prone to deterioration.

\section{Results}
\label{sec:results}

This section presents the main empirical results. We first document spillover
patterns within county and chain networks. We then report the estimated
network-size thresholds for monitoring failures. Finally, we examine how
spillover strength, variance, and deterioration differ across the thresholds.

\subsection{Spillover Patterns in County and Chain Networks}
\label{subsec:spillover_results}

Table~\ref{tab:spillover_county} presents estimates of
$\beta^{\text{cty}}$ from equation~\eqref{eq:spillover_county} for all
facilities. Peer means at the county level are strong predictors of facility
outcomes across all measures. Facilities located in counties with higher
average overall ratings and staffing ratings tend to perform better themselves,
while counties with more severe deficiencies exhibit higher deficiencies for
individual facilities. The county peer coefficient is 0.274 for overall
ratings, 0.162 for staffing ratings, and 0.444 for total deficiencies, all
highly significant.

Table \ref{tab:spillover_chain_county} reports estimates from
equation \ref{eq:spillover_chain} for chain-affiliated facilities.
Spillovers transmitted through organizational networks ($\beta^{\text{ch}}$)
are substantial and typically larger in magnitude than county-level spillovers
for chain facilities. The chain peer coefficients are 0.720 for overall
ratings, 0.724 for staffing ratings, and 0.561 for deficiencies, again highly
significant. County spillovers remain statistically significant, with
coefficients between 0.127 and 0.455, highlighting that chain facilities are
embedded in both organizational and geographic systems.

These spillover results provide reduced-form evidence that both networks are
operational channels through which complementarities in the theoretical model
may be expressed. The magnitudes are consistent with substantial interdependence
in performance within both counties and chains.

\subsection{Network-Size Thresholds for Monitoring Failures}
\label{sec:thresholds}

We next estimate the network-size thresholds predicted by the model. According
to the theory, monitoring institutions exhibit a regime shift when network size
crosses the critical value $n^\ast(\lambda)$: in small or weakly connected
networks, decentralized oversight performs adequately, but when networks become
sufficiently large, local monitoring becomes fragile and severe regulatory
failures become more common.

\subsubsection{SNF-Only County Network Threshold}

We first consider the SNF-only county network, where $n_c$ denotes the number
of nursing homes in county $c$. The grid search identifies a unique breakpoint:
\[
\hat c_{SNF} \approx 2.20,
\qquad
n_{c}^{SNF*} \approx e^{2.20} \simeq 7.
\]
The $\sup F$ statistic is:
\[
F_{\max}^{SNF} = 508.5, \qquad p < 0.001,
\]
strongly rejecting the no-break null. Below the threshold, SFF incidence
increases slowly---almost flat---with county size. Above the threshold, the
slope increases sharply: counties with more than approximately nine nursing
homes account for a disproportionate share of SFF designations.

Table \ref{tab:county_thresholds} summarizes the county-level threshold estimates and Figure \ref{fig:threshold_snf}
illustrates the fitted kinked relationship. The break term is large and precisely estimated, and including
controls and state fixed effects does not materially alter the break coefficient
or the goodness of fit.
\subsection*{Placebo Checks on County Thresholds}

As a falsification exercise, we replicate the Bai--Perron single-break
procedure on outcomes and forcing variables that, a priori, should not exhibit
a monitoring threshold. The idea is to assess whether the sharp kink in SFF
placements around 6--7 facilities is a generic feature of the method or
specific to monitoring failures.

First, we replace the outcome with county-level ownership composition. Using
the share of non-profit facilities and the share of government facilities as
dependent variables and $\log n_c$ as the forcing variable, the grid search
selects breakpoints at very small county sizes (roughly 2--3 facilities), far
from the region where SFF placements are dense. The improvement in fit relative
to a linear specification is modest, and the implied kinks are small and
unstable across specifications (Table~\ref{tab:placebo_wrong_outcome}). By
contrast, for SFF counts the same procedure delivers a stable breakpoint in the
6--7 facility range, with a pronounced change in slope.

Second, we consider a ``wrong forcing'' placebo in which the outcome is the
county SFF count but the forcing variable is the share of for-profit
facilities. Estimating a single-break kink in this share yields an
``optimal'' breakpoint at a high for-profit share (around 0.90), but the
pre- and post-break slopes are economically small and the overall gain in fit
relative to a linear model is limited (Figure~\ref{fig:LR_profile_county_placebo}). In
other words, the algorithm can mechanically pick a kink in an arbitrary
covariate, but the resulting break does not resemble the sharp, interpretable
threshold found when network size is used as the forcing variable.

Taken together, these placebos suggest that the county SFF threshold is not an
artifact of the grid-search procedure or of generic nonlinearity. The
substantive break arises when SFF counts are regressed on network size, not
when ownership composition is used as an outcome or forcing variable, which is
consistent with the model’s emphasis on network scale rather than
cross-sectional composition.

\subsubsection{Chain-Level Threshold in Organizational Failures}

We then examine organizational networks of nursing home chains. Let $n_j$
denote the number of facilities operated by chain $j$. Applying the same
structural-break approach yields:
\[
\hat c_{CHAIN} = 3.526,
\qquad
n_{j}^{CHAIN*} = e^{3.526} \approx 34.
\]
The $\sup F$ statistic is:
\[
F_{\max}^{CHAIN} = 36.3, \qquad p < 0.001,
\]
again rejecting the linear model in favor of a kinked relationship.

Chains larger than approximately 34 facilities exhibit a markedly higher
incidence of SFF placements or severe regulatory actions. Below the threshold,
chain-level oversight appears sufficient to control complementarities and
maintain relatively stable performance. Above it, monitoring appears strained,
consistent with the theoretical prediction that internal delegation fails once
network size exceeds $n^\ast(\lambda)$. Table \ref{tab:chain_thresholds}
presents the chain-level threshold estimates, and Figure \ref{fig:threshold_chain}
illustrates the fitted kinked relationship.

\subsection{Spillover Strength, Variance, and Deterioration Across Thresholds}
\label{subsec:failure}

The theoretical framework implies that when complementarities are strong
relative to network size---that is, in the region $\lambda > \lambda^\ast(n)$,
or equivalently $n > n^\ast(\lambda)$---decentralized monitoring becomes
fragile. In this regime, shocks propagate more easily across units, leading to
stronger peer spillovers, greater dispersion in outcomes, and more pronounced
deterioration over time. We examine these three manifestations in turn.

\paragraph{Spillover strength across thresholds.}

We begin by testing whether peer spillovers intensify once network size exceeds
the estimated thresholds. For county networks, we estimate equation
\eqref{eq:spillover_county} separately for facilities located in counties below
and above the SNF threshold $\widehat{n}_{SNF}^\ast \approx 7$. Table~\ref{tab:below_cty_spill}
reports results for counties below the cutoff, and Table~\ref{tab:spillover_county_above}
reports results for counties above the cutoff. The estimated spillover
coefficients $\widehat{\beta}^{\text{cty}}$ are much larger in counties above
the threshold. For example, the county peer coefficient for overall ratings
rises from 0.150 below the cutoff to 0.758 above it.

For chain-affiliated facilities, we repeat the analysis using the chain
threshold $\widehat{n}_{J}^\ast \approx 34$ and estimate
equation~\eqref{eq:spillover_chain} separately for chains below and above this
cutoff. The results are summarized in Tables~\ref{tab:below_chain_cty_spill}
and~\ref{tab:spillover_chain_county_above}. Organizational spillovers
$\widehat{\beta}^{\text{ch}}$ are significantly stronger in large chains (0.837
vs.\ 0.680 for overall ratings), suggesting that corporate monitoring does not
fully internalize performance interactions once the chain exceeds its effective
oversight scale. Table~\ref{tab:spillover_comparison} summarizes the
above--below differences. Together, these results show that the strength of
peer effects increases sharply in large networks, offering empirical evidence
that spillover propagation is itself a marker of monitoring strain.

\subsection{Variance and Deterioration Tests}
\label{subsec:variance_deterioration_result}

The stochastic extension of the model suggests that, in networks operating
above the threshold $n^\ast(\lambda)$, idiosyncratic shocks may be amplified
through complementarities, leading to greater dispersion in outcomes and more
pronounced deterioration. We assess these implications by comparing variance
and deterioration across the estimated county and chain thresholds.

For counties, the evidence is strongest for deficiencies. When we split
counties at the $\widehat{n}_{SNF}^\ast \approx 7$ breakpoint, the
cross-sectional variance of total deficiencies is substantially higher in
large counties (about $178$ versus $92$ below the threshold), and Levene
tests strongly reject equality of variances (see Table \ref{tab:variance_thresholds}). By contrast, the variance of
overall ratings is essentially unchanged across the threshold, and the
variance of staffing ratings is slightly lower in large counties, despite
being statistically different. Thus, variance amplification appears
concentrated in deficiency outcomes rather than in ratings.

For chain networks, we do not find variance amplification. Chains above the
$\widehat{n}_{J}^\ast \approx 34$ cutoff exhibit lower dispersion in ratings
and deficiencies than smaller chains, with Levene tests rejecting equality of
variances in the direction of greater homogeneity in large chains. This
pattern is consistent with stronger internal standardization in very large
organizations and suggests that the simple variance-amplification prediction
of the model does not apply uniformly across institutional architectures.

We obtain clearer support for the deterioration prediction at the county
level. Using the change in total deficiencies between inspection cycles as a
facility-level measure of deterioration, we regress $\Delta \text{def}_i$ on
indicators for membership in large chains and large counties, controlling for
facility characteristics and state fixed effects (see Table \ref{tab:deterioration}). Facilities in counties
above the SNF threshold exhibit significantly greater deterioration in
deficiencies (with a positive and precisely estimated coefficient on the
large-county indicator). For large chains, the point estimate on the
large-chain indicator is also positive but imprecisely estimated, and the
confidence interval includes zero. In other words, we find robust evidence
that deterioration is worse in large county systems, while the deterioration
pattern in large chains is suggestive but not statistically decisive.

\section{Discussion and Policy Implications}
\label{sec:discussion}

The empirical analysis reveals a clear but nuanced pattern. Both organizational
networks (chains) and geographic systems (counties) display sharp threshold
behavior in monitoring outcomes: chains with more than roughly 34 facilities
and counties with more than about 7 nursing homes account for a
disproportionate share of severe federal interventions (SFF designations) and
exhibit stronger within-network peer spillovers. At the same time, the
variance and deterioration patterns differ across network types. In counties,
the cross-sectional variance of deficiencies and the deterioration in
deficiencies between inspection cycles are substantially higher above the
threshold, whereas in large chains outcomes are more homogeneous and the
evidence on deterioration is positive but statistically imprecise. These
breakpoints correspond closely to the theoretical region $n > n^\ast(\lambda)$
in which complementarities are sufficiently strong relative to monitoring
capacity that decentralized oversight becomes fragile, particularly in large
geographic systems and for deficiency-type outcomes.

\paragraph{Implications for monitoring design.}
The results highlight a fundamental scalability constraint in monitoring
institutions, with the strongest evidence on the geographic side. Decentralized,
county-based oversight appears to function relatively well in small and
medium-sized networks, but once the county system becomes large enough, shocks
to quality propagate more widely, deficiencies become more dispersed, and
deterioration is more pronounced. This pattern provides an empirical foundation
for selective federal involvement. Programs such as the Special Focus Facility
initiative are likely to be most impactful in counties where the size of the
provider network exceeds the threshold at which local oversight becomes
structurally fragile.

For organizational networks, the evidence suggests that internal delegation to
corporate headquarters is effective at preventing extreme dispersion but does
not eliminate monitoring failures. Chains larger than approximately 30--40
facilities still exhibit higher rates of severe regulatory problems, even
though ratings and deficiencies are less dispersed and the estimated increase
in deterioration is not statistically precise. This combination points to a
different type of constraint: very large chains may successfully standardize
practices yet still struggle to prevent recurrent failures at the tail of the
quality distribution. These findings imply that corporate compliance teams may
require explicit scaling rules---for example, compliance staffing that grows
with network size---and that regulators may need to apply enhanced scrutiny to
very large chains whose internal monitoring systems are systematically exposed
to tail risks.

\paragraph{Targeting and intervention.}
The estimated thresholds provide a practical guide for allocating limited
regulatory resources. CMS could prioritize SFF screening, focused surveys, and
enhanced reviews in:
\begin{itemize}[leftmargin=1.4em]
\item counties exceeding the 7-SNF threshold, where both the incidence and the
      dispersion of deficiencies and their deterioration are significantly
      higher; and
\item chains exceeding the 34-facility threshold, which account for a
      disproportionate share of severe monitoring failures even if their
      facilities are more homogeneous on average.
\end{itemize}
Such targeting is consistent with the model: once a network enters the
high-complementarity region ($\lambda > \lambda^\ast(n)$), decentralized
monitoring becomes less reliable as a stabilizing mechanism, so additional
federal oversight is likely to yield larger marginal benefits.

\paragraph{External validity.}
Although the empirical focus is on nursing homes, the underlying mechanism is
general. Many public-good systems---school districts, hospital networks, water
utilities, foster-care agencies---feature decentralized monitoring embedded in
network structures. These systems may similarly exhibit threshold behavior in
oversight effectiveness, though the balance between dispersion and
standardization may differ by institutional architecture. The framework
developed here thus offers a broader perspective on when centralization or
selective higher-level intervention becomes a necessary institutional response
to network scale, and on which dimensions of performance are most likely to
deteriorate once systems become ``too big to monitor.''

\section{Conclusion}
\label{sec:conclusion}

This paper develops and tests a simple but powerful insight: monitoring
institutions in networked public-good environments can exhibit sharp threshold
behavior. When agents are linked through organizational or geographic networks,
effort complementarities amplify shocks and may undermine decentralized
oversight. The theoretical model shows that there exists a network-size
threshold $n^\ast(\lambda)$ above which centralized monitoring becomes
optimal, even if it is intrinsically more costly, because it internalizes
spillovers that decentralized regimes ignore.

Empirically, we find strong evidence of such thresholds in U.S.\ nursing homes.
Organizational chains display a breakpoint at roughly 34 facilities, while
county systems exhibit a breakpoint at approximately 7 nursing homes. Above
these thresholds, facilities experience higher rates of severe regulatory
intervention and stronger within-network peer correlations. For counties, the
cross-sectional variance of deficiencies and the deterioration of deficiencies
between inspection cycles are both substantially larger above the threshold,
consistent with the notion that large geographic systems amplify shocks when
local oversight is thin. For chains, outcomes are more homogeneous but large
networks still generate more severe failures, and the estimated association
with deterioration is positive but statistically imprecise. Overall, the
empirical patterns align with the theoretical mechanism: large networks cross
into a region where decentralized oversight is less able to internalize
spillovers and to prevent recurrent breakdowns in quality.

The paper contributes to the literatures on networks, monitoring, and public
economics by identifying a structural constraint in monitoring systems that
depends jointly on network size and externalities and by showing how this
constraint manifests differently in geographic and organizational networks. The
results underscore the importance of designing oversight institutions that
scale with the size and interconnectedness of the systems they regulate and
highlight the value of using network thresholds to target scarce regulatory
resources.

Future research could extend the theoretical model to dynamic settings, explore
optimal network partitioning or reorganization as a policy tool, or examine
similar thresholds in other public-good sectors. More broadly, understanding
how network architecture interacts with monitoring capacity is central to the
design of effective regulatory institutions in increasingly large and
interconnected systems.

\newpage

\bibliography{ref_cent} 

\appendix

\section{Proofs}
\label{app:proofs}

This appendix provides full derivations and proofs for the lemmas,
theorem, and corollary stated in Section~\ref{sec:theory}. Throughout,
$r\in\{C,D\}$ indexes the monitoring regime, with $D$ denoting
decentralized monitoring and $C$ centralized monitoring. In the
benchmark case used in the main text we set $\lambda_D=0$ and write
$\lambda_C=\lambda\in[0,1)$, with $K_C>K_D>0$. Appendix~\ref{app:theory}
extends the analysis to the more general case $0\le\lambda_D<\lambda_C<1$.

\subsection{Proof of Lemma~\ref{lem:effort_n}}

Unit $i$’s payoff under regime $r$ is
\[
u_i^r(e_i,e_{-i};\mu_r,\lambda_r)
= \bigl(1 + \lambda_r \bar e_{-i} + \mu_r\varphi \bigr)e_i
  - \tfrac12 e_i^2,
\qquad
\bar e_{-i}=\frac{1}{n-1}\sum_{j\neq i}e_j.
\]

Taking $\bar e_{-i}$ as given, the first-order condition for unit $i$ is
\[
\frac{\partial u_i^r}{\partial e_i}
=
1+\lambda_r \bar e_{-i} + \mu_r \varphi - e_i
=0,
\]
which yields the individual best response
\[
e_i = 1 + \lambda_r \bar e_{-i} + \mu_r\varphi.
\]

In a symmetric equilibrium, $e_i=e_j=e_r$ for all $i,j$, so
$\bar e_{-i}=e_r$ and the condition becomes
\[
e_r = 1 + \lambda_r e_r + \mu_r\varphi.
\]
Rearranging gives
\[
e_r(1-\lambda_r) = 1 + \mu_r\varphi
\quad\Longrightarrow\quad
e_r(\mu_r,\lambda_r)
= \frac{1+\mu_r\varphi}{1-\lambda_r}.
\]
This establishes the lemma. \qed

\subsection{Proof of Lemma~\ref{lem:mu_star_n}}

The regulator’s objective under regime $r$ is
\[
W_r(\mu_r;\lambda_r)
=
n\, e_r(\mu_r,\lambda_r)
- \tfrac12 K_r \mu_r^2,
\qquad
e_r(\mu_r,\lambda_r)
= \frac{1+\mu_r\varphi}{1-\lambda_r}.
\]

Substituting the expression for effort,
\[
W_r(\mu_r;\lambda_r)
=
n\,\frac{1+\mu_r\varphi}{1-\lambda_r}
- \tfrac12 K_r \mu_r^2.
\]

Differentiating with respect to $\mu_r$,
\[
\frac{\partial W_r}{\partial \mu_r}
=
n\frac{\varphi}{1-\lambda_r} - K_r \mu_r.
\]

The first-order condition for an interior optimum is
\[
K_r\mu_r^\ast
=
n\,\frac{\varphi}{1-\lambda_r}
\quad\Longrightarrow\quad
\mu_r^\ast(\lambda_r)
=
\frac{n\varphi}{K_r(1-\lambda_r)}.
\]
This proves the lemma. \qed

\subsection{Proof of Lemma~\ref{lem:W_star_n}}

Substitute $\mu_r^\ast(\lambda_r)$ into the effort function:
\[
e_r^\ast(\lambda_r)
=
\frac{1+\mu_r^\ast(\lambda_r)\varphi}{1-\lambda_r}
=
\frac{1+\frac{n\varphi^2}{K_r(1-\lambda_r)}}{1-\lambda_r}
=
\frac{1-\lambda_r + \frac{n\varphi^2}{K_r}}{(1-\lambda_r)^2}.
\]

Welfare under optimal monitoring is
\[
W_r^\ast(\lambda_r)
=
n\, e_r^\ast(\lambda_r)
- \tfrac12 K_r \bigl(\mu_r^\ast(\lambda_r)\bigr)^2.
\]

Compute each term separately.

\emph{Effort term:}
\[
n\, e_r^\ast(\lambda_r)
=
n \frac{1-\lambda_r + \frac{n\varphi^2}{K_r}}{(1-\lambda_r)^2}
=
\frac{n}{1-\lambda_r}
+
\frac{n^2\varphi^2}{K_r(1-\lambda_r)^2}.
\]

\emph{Monitoring cost term:}
\[
\tfrac12 K_r \bigl(\mu_r^\ast(\lambda_r)\bigr)^2
=
\tfrac12 K_r \left( \frac{n\varphi}{K_r(1-\lambda_r)} \right)^2
=
\frac{n^2 \varphi^2}{2K_r(1-\lambda_r)^2}.
\]

Subtracting the cost term from the effort term yields
\[
W_r^\ast(\lambda_r)
=
\frac{n}{1-\lambda_r}
+
\frac{n^2\varphi^2}{K_r(1-\lambda_r)^2}
-
\frac{n^2\varphi^2}{2K_r(1-\lambda_r)^2}
=
\frac{n}{1-\lambda_r}
+
\frac{n^2\varphi^2}{2K_r(1-\lambda_r)^2},
\]
which is the expression stated in Lemma~\ref{lem:W_star_n}. \qed

\subsection{Proof of Theorem~\ref{thm:n_unit_threshold_final}}

Recall that in the benchmark case we compare centralized monitoring with
complementarity parameter $\lambda_C=\lambda$ to decentralized monitoring
with $\lambda_D=0$. From Lemma~\ref{lem:W_star_n},
\[
W_C^\ast(\lambda)
=
\frac{n}{1-\lambda}
+
\frac{n^2\varphi^2}{2K_C(1-\lambda)^2},
\qquad
W_D^\ast
=
n + \frac{n^2\varphi^2}{2K_D}.
\]

\paragraph{Part (i).}
For fixed $n$, $W_C^\ast(\lambda)$ is continuous and strictly
increasing in $\lambda$ on $[0,1)$. This follows because both
$\frac{1}{1-\lambda}$ and $\frac{1}{(1-\lambda)^2}$ are strictly
increasing in $\lambda$ on this domain.

\paragraph{Part (ii).}
At $\lambda=0$,
\[
W_C^\ast(0)
=
n + \frac{n^2\varphi^2}{2K_C}
<
n + \frac{n^2\varphi^2}{2K_D}
=
W_D^\ast,
\]
because $K_C>K_D$. As $\lambda\to 1^-$, the denominator $(1-\lambda)^2$
in $W_C^\ast(\lambda)$ converges to zero, so $W_C^\ast(\lambda)\to
+\infty$, while $W_D^\ast$ is constant in $\lambda$.

\paragraph{Part (iii).}
Define $\Delta_n(\lambda)=W_C^\ast(\lambda)-W_D^\ast$. By parts (i) and
(ii), $\Delta_n(\lambda)$ is continuous, strictly increasing in
$\lambda$, satisfies $\Delta_n(0)<0$, and diverges to $+\infty$ as
$\lambda\to 1^-$. Hence there exists a unique $\lambda^\ast(n)\in(0,1)$
such that $\Delta_n(\lambda^\ast(n))=0$, with the sign pattern stated in
the theorem.

\paragraph{Part (iv).}
For each fixed $\lambda\in(0,1)$, $W_C^\ast(\lambda)$ is strictly
increasing and strictly convex in $n$, with the leading term of order
$n^2$, whereas $W_D^\ast$ is strictly increasing and affine in $n$. It
follows that $\Delta_n(\lambda)$ is strictly increasing in $n$, with the
gap between regimes growing faster in larger networks.

Define $\lambda^\ast(n)$ implicitly by
$\Delta_n\bigl(\lambda^\ast(n)\bigr)=0$. By the implicit function
theorem,
\[
\frac{d\lambda^\ast(n)}{dn}
=
-\,\frac{\partial \Delta_n/\partial n}{\partial \Delta_n/\partial \lambda}.
\]
The denominator is strictly positive by part (i), while the numerator is
strictly positive because $\Delta_n(\lambda)$ shifts upward in $n$.
Hence $d\lambda^\ast(n)/dn<0$, as claimed. \qed

\subsection{Proof of Corollary~\ref{cor:n_of_lambda}}

Theorem~\ref{thm:n_unit_threshold_final} establishes that for each
$n\ge2$ there exists a unique $\lambda^\ast(n)\in(0,1)$ such that
centralization is optimal if and only if $\lambda>\lambda^\ast(n)$, and
that $\lambda^\ast(n)$ is strictly decreasing in $n$. Strict
monotonicity implies that $\lambda^\ast(\cdot)$ is invertible on its
range. For each fixed $\lambda$ in this range, there exists a unique
$n^\ast(\lambda)$ such that
\[
\lambda^\ast\bigl(n^\ast(\lambda)\bigr)=\lambda.
\]

The sign pattern for $W_C^\ast(n,\lambda)-W_D^\ast(n)$ follows directly
from the definition of $\lambda^\ast(n)$, and the fact that
$n^\ast(\lambda)$ is strictly decreasing in $\lambda$ follows from the
inverse function theorem applied to the strictly decreasing function
$\lambda^\ast(\cdot)$. \qed

\section{Additional Theory Results}
\label{app:theory}

This appendix provides additional detail on the regulator's explicit
centralization decision and extends the benchmark analysis
($\lambda_D=0$) to the general case in which both regimes feature
complementarities, with $0\le\lambda_D<\lambda_C<1$. The goal is to
make transparent how the model in Section~\ref{sec:theory} leads to a
threshold comparison between decentralized and centralized monitoring
regimes, as formalized in Theorem~\ref{thm:threshold}.

We proceed in three steps:
\begin{enumerate}[leftmargin=1.4em]
  \item We restate the regulator's problem and recall the equilibrium
        effort and welfare expressions.
  \item We define the welfare gap between centralized and decentralized
        monitoring and describe its dependence on network size $n$.
  \item We use this welfare gap to characterize a network-size
        threshold $n^\ast$ at which the regulator switches from
        preferring decentralization to preferring centralization, and
        discuss the associated comparative statics.
\end{enumerate}

Throughout, $r \in \{D,C\}$ indexes the monitoring regime, with $D$
standing for decentralized monitoring and $C$ for centralized
monitoring.

\subsection{Regulator's problem and equilibrium welfare}

We consider a network of $n \ge 2$ identical units. Under regime $r$,
the regulator chooses a monitoring intensity $\mu_r \ge 0$. Each unit
then chooses monitoring effort $e_i \ge 0$, taking other units' efforts
as given. The payoff for unit $i$ under regime $r$ is
\[
u_i^r(e_i,e_{-i};\mu_r,\lambda_r)
= \bigl(1 + \lambda_r \bar e_{-i} + \mu_r \varphi \bigr)e_i
  - \tfrac12 e_i^2,
\qquad
\bar e_{-i}=\frac{1}{n-1} \sum_{j\neq i} e_j,
\]
where $\lambda_r \in [0,1)$ captures the strength of complementarities
and $\varphi>0$ measures the effectiveness of monitoring. In the
benchmark featured in Section~\ref{sec:theory}, decentralization does
not internalize complementarities ($\lambda_D=0$), while centralization
does ($\lambda_C=\lambda$). In the more general formulation used in
Theorem~\ref{thm:threshold}, both regimes may feature complementarities,
with $0\le\lambda_D<\lambda_C<1$.

The regulator's objective under regime $r$ is to maximize aggregate
equilibrium effort net of the cost of operating the monitoring system:
\[
W_r(\mu_r;\lambda_r)
=
n\, e_r(\mu_r,\lambda_r)
- \tfrac12 K_r \mu_r^2,
\]
where $e_r(\mu_r,\lambda_r)$ is the symmetric equilibrium effort and
$K_r>0$ is a regime-specific cost parameter. We assume $K_C \ge K_D$:
centralization is at least as costly as decentralization, and typically
strictly more costly.

From Lemmas~\ref{lem:effort_n} and~\ref{lem:mu_star_n}, equilibrium
effort and optimal monitoring are
\[
e_r(\mu_r,\lambda_r)
= \frac{1 + \mu_r \varphi}{1 - \lambda_r},
\qquad
\mu_r^\ast(\lambda_r)
= \frac{n\varphi}{K_r(1 - \lambda_r)}.
\]
Substituting yields equilibrium welfare under regime $r$:
\begin{equation}
\label{eq:W_star_appendix}
W_r^\ast(\lambda_r)
=
\frac{n}{1-\lambda_r}
+
\frac{n^2 \varphi^2}{2K_r(1-\lambda_r)^2}.
\end{equation}
In particular, under decentralization in the benchmark case
($\lambda_D = 0$),
\[
W_D^\ast = n + \frac{n^2\varphi^2}{2K_D}.
\]

Expression~\eqref{eq:W_star_appendix} highlights two channels through
which complementarities affect welfare: a level effect, via
$n/(1-\lambda_r)$, and an amplification effect, via the
$n^2\varphi^2/(1-\lambda_r)^2$ term. The latter becomes especially
important in large networks.

\subsection{The welfare gap between regimes}

Given the closed-form expressions for $W_r^\ast(\lambda_r)$, the
regulator's centralization decision reduces to comparing equilibrium
welfare under decentralization and centralization:
\[
\max_{r\in\{D,C\}} W_r^\ast(\cdot)
\quad\Longleftrightarrow\quad
\text{sign of } \Delta_n(\lambda),
\]
where in the benchmark case we define the welfare gap by
\[
\Delta_n(\lambda)
=
W_C^\ast(\lambda) - W_D^\ast,
\]
with $W_C^\ast(\lambda) \equiv W_C^\ast(\lambda_C=\lambda)$ and
$W_D^\ast \equiv W_D^\ast(\lambda_D=0)$.

Using~\eqref{eq:W_star_appendix},
\[
W_C^\ast(\lambda)
=
\frac{n}{1-\lambda}
+
\frac{n^2 \varphi^2}{2K_C(1-\lambda)^2},
\qquad
W_D^\ast
=
n + \frac{n^2\varphi^2}{2K_D}.
\]

Three properties are immediate:

\begin{itemize}[leftmargin=1.4em]
  \item For fixed $n$, $W_C^\ast(\lambda)$ is continuous and strictly
        increasing in $\lambda$ on $[0,1)$, reflecting that
        centralization becomes more attractive as complementarities
        strengthen.
  \item At $\lambda = 0$,
        \[
        W_C^\ast(0) = n + \frac{n^2\varphi^2}{2K_C}
        <
        n + \frac{n^2\varphi^2}{2K_D}
        = W_D^\ast,
        \]
        because $K_C > K_D$. When complementarities are absent, the
        higher cost of centralization dominates.
  \item As $\lambda \to 1^{-}$, the terms involving $(1-\lambda)$ in the
        denominator drive $W_C^\ast(\lambda) \to +\infty$, while
        $W_D^\ast$ is constant in $\lambda$. For sufficiently strong
        complementarities, centralization always dominates.
\end{itemize}

These properties guarantee that for each fixed $n$ there exists a unique
complementarity threshold $\lambda^\ast(n) \in (0,1)$ such that
\[
W_C^\ast(\lambda)
\begin{cases}
< W_D^\ast, & \lambda < \lambda^\ast(n), \\
= W_D^\ast, & \lambda = \lambda^\ast(n), \\
> W_D^\ast, & \lambda > \lambda^\ast(n),
\end{cases}
\]
which is the content of Theorem~\ref{thm:n_unit_threshold_final}.

\begin{theorem}[Centralization threshold with decentralized complementarities]
\label{thm:threshold}
Fix $\varphi>0$, $K_D>0$, $K_C\ge K_D$, and $0 \le \lambda_D < 1$.
For a network of size $n\ge 2$, let
\[
W_C^\ast(n,\lambda_C,K_C)
=
\frac{n}{1-\lambda_C}
+
\frac{n^2 \varphi^2}{2K_C(1-\lambda_C)^2},
\qquad
\lambda_C \in [\lambda_D,1),
\]
denote equilibrium welfare under centralized monitoring, and
\[
W_D^\ast(n,\lambda_D,K_D)
=
\frac{n}{1-\lambda_D}
+
\frac{n^2 \varphi^2}{2K_D(1-\lambda_D)^2}
\]
denote equilibrium welfare under decentralized monitoring.

Then, for each $n\ge 2$ and parameter tuple
$(\lambda_D,K_C,K_D,\varphi)$:

\begin{enumerate}
\item[(i)] $W_C^\ast(n,\lambda_C,K_C)$ is continuous and strictly increasing in
      $\lambda_C$ on $[\lambda_D,1)$.
\item [(ii)]At $\lambda_C = \lambda_D$,
      \[
      W_C^\ast(n,\lambda_D,K_C)
      \;\le\;
      W_D^\ast(n,\lambda_D,K_D),
      \]
      with strict inequality whenever $K_C > K_D$.
\item[(iii)] As $\lambda_C \to 1^-$,
      \[
      W_C^\ast(n,\lambda_C,K_C) \to +\infty.
      \]
\item [(iv)]There exists a unique threshold
      $\lambda_C^\ast(n) \in [\lambda_D,1)$ such that
      \[
      W_C^\ast\bigl(n,\lambda_C^\ast(n),K_C\bigr)
      =
      W_D^\ast(n,\lambda_D,K_D),
      \]
      and
      \[
      W_C^\ast(n,\lambda_C,K_C)
      \begin{cases}
      < W_D^\ast(n,\lambda_D,K_D), & \lambda_C < \lambda_C^\ast(n), \\[4pt]
      > W_D^\ast(n,\lambda_D,K_D), & \lambda_C > \lambda_C^\ast(n).
      \end{cases}
      \]
\end{enumerate}
In particular, for each $n$ there is a well-defined complementarity
threshold $\lambda_C^\ast(n)$ above which centralization dominates
decentralization, even when the decentralized regime features
$\lambda_D>0$.
\end{theorem}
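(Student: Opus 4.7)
The plan is to prove Theorem~\ref{thm:threshold} by paralleling the argument used for Theorem~\ref{thm:n_unit_threshold_final}, but replacing the benchmark $\lambda_D=0$ with an arbitrary $\lambda_D\in[0,1)$ and isolating where positive decentralized complementarities alter the comparison. The key observation is that the functional form of $W_C^\ast(n,\lambda_C,K_C)$ in $\lambda_C$ is unchanged from the benchmark case: it is the sum of $n/(1-\lambda_C)$ and $n^2\varphi^2/[2K_C(1-\lambda_C)^2]$, both of which are continuous, strictly increasing in $\lambda_C$, and divergent to $+\infty$ as $\lambda_C\to 1^-$ on $[\lambda_D,1)$. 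This delivers parts (i) and (iii) by direct inspection, without any need to re-derive the equilibrium expressions supplied by Lemmas~\ref{lem:effort_n}--\ref{lem:W_star_n}.

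For part (ii), I would substitute $\lambda_C=\lambda_D$ into the expression for $W_C^\ast$ and compare it term-by-term with $W_D^\ast(n,\lambda_D,K_D)$. The $n/(1-\lambda_D)$ terms coincide exactly, and the quadratic-in-$n$ terms differ only by the factor $1/K_C$ versus $1/K_D$. Since both $(1-\lambda_D)^2$ and $n^2\varphi^2$ are strictly positive, the inequality $W_C^\ast(n,\lambda_D,K_C)\le W_D^\ast(n,\lambda_D,K_D)$ follows from $K_C\ge K_D$, with strict inequality precisely when $K_C>K_D$.

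For part (iv), I would define the welfare gap $\Delta_n(\lambda_C)=W_C^\ast(n,\lambda_C,K_C)-W_D^\ast(n,\lambda_D,K_D)$ on $[\lambda_D,1)$ and combine parts (i)--(iii). Strict monotonicity and continuity from (i), together with $\Delta_n(\lambda_D)\le 0$ from (ii) and $\Delta_n(\lambda_C)\to+\infty$ from (iii), allow the intermediate value theorem to deliver existence, while strict monotonicity ensures uniqueness of $\lambda_C^\ast(n)\in[\lambda_D,1)$ solving $\Delta_n(\lambda_C^\ast(n))=0$, with the sign pattern on either side of the threshold following from the same monotonicity.

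The main substantive point to track, rather than a deep obstacle, is the boundary case $K_C=K_D$: there $\Delta_n(\lambda_D)=0$ exactly, so $\lambda_C^\ast(n)=\lambda_D$, consistent with the interpretation that when centralization carries no cost disadvantage, any strict increase in complementarity above $\lambda_D$ suffices to tip the comparison. Under the typical assumption $K_C>K_D$ from the main text, $\Delta_n(\lambda_D)<0$ strictly and $\lambda_C^\ast(n)$ lies in the interior of $[\lambda_D,1)$. The argument is essentially algebraic and inherits its structure from Theorem~\ref{thm:n_unit_threshold_final}; the only real care required is in verifying that the cost-driven inequality at the left endpoint is weak rather than strict when $K_C=K_D$, so that the threshold statement in (iv) is correctly phrased as $\lambda_C^\ast(n)\in[\lambda_D,1)$ rather than $(\lambda_D,1)$.
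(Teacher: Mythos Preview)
Your proposal is correct and follows essentially the same route as the paper's own proof: both verify monotonicity and divergence of $W_C^\ast$ in $\lambda_C$ directly from the closed form, compare the two welfare expressions term-by-term at $\lambda_C=\lambda_D$ to obtain the cost-driven inequality, and then apply the intermediate value theorem plus strict monotonicity to the welfare gap $\Delta_n(\lambda_C)$. Your explicit treatment of the boundary case $K_C=K_D$ (where $\lambda_C^\ast(n)=\lambda_D$) is a nice clarification that the paper leaves implicit.
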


\begin{proof}
Parts (i)–(iii) follow directly from the closed-form expressions for
$W_C^\ast$ and $W_D^\ast$.

\medskip\noindent
\emph{(i) Monotonicity in $\lambda_C$.}
Fix $(n,K_C,\varphi)$. Differentiating $W_C^\ast$ with respect to
$\lambda_C$ yields
\[
\frac{\partial W_C^\ast}{\partial \lambda_C}
=
\frac{n}{(1-\lambda_C)^2}
+
\frac{n^2 \varphi^2}{K_C(1-\lambda_C)^3}.
\]
Both terms are strictly positive for $\lambda_C\in[0,1)$, so
$W_C^\ast(n,\lambda_C,K_C)$ is strictly increasing and continuous in
$\lambda_C$ on $[\lambda_D,1)$.

\medskip\noindent
\emph{(ii) Comparison at $\lambda_C=\lambda_D$.}
Evaluate $W_C^\ast$ at $\lambda_C=\lambda_D$:
\[
W_C^\ast(n,\lambda_D,K_C)
=
\frac{n}{1-\lambda_D}
+
\frac{n^2 \varphi^2}{2K_C(1-\lambda_D)^2},
\]
while
\[
W_D^\ast(n,\lambda_D,K_D)
=
\frac{n}{1-\lambda_D}
+
\frac{n^2 \varphi^2}{2K_D(1-\lambda_D)^2}.
\]
Subtracting,
\[
W_C^\ast(n,\lambda_D,K_C)-W_D^\ast(n,\lambda_D,K_D)
=
\frac{n^2 \varphi^2}{2(1-\lambda_D)^2}
\biggl(
  \frac{1}{K_C} - \frac{1}{K_D}
\biggr)
\le 0,
\]
with strict inequality whenever $K_C>K_D$. Thus at equal
complementarities $\lambda_C=\lambda_D$, centralization yields weakly
lower welfare than decentralization because it is (weakly) more costly.

\medskip\noindent
\emph{(iii) Behavior as $\lambda_C\to 1^-$.}
From the definition of $W_C^\ast$,
\[
W_C^\ast(n,\lambda_C,K_C)
=
\frac{n}{1-\lambda_C}
+
\frac{n^2 \varphi^2}{2K_C(1-\lambda_C)^2}.
\]
As $\lambda_C\to 1^{-}$, both $(1-\lambda_C)^{-1}$ and
$(1-\lambda_C)^{-2}$ diverge, so
$W_C^\ast(n,\lambda_C,K_C)\to +\infty$.

\medskip\noindent
\emph{(iv) Existence and uniqueness of the threshold.}
Define the welfare gap
\[
\Delta_n(\lambda_C)
=
W_C^\ast(n,\lambda_C,K_C)
-
W_D^\ast(n,\lambda_D,K_D).
\]
By (i), $\Delta_n(\lambda_C)$ is continuous and strictly increasing in
$\lambda_C$ on $[\lambda_D,1)$. By (ii),
$\Delta_n(\lambda_D)\le 0$, with strict inequality when $K_C>K_D$. By
(iii), $\Delta_n(\lambda_C)\to +\infty$ as $\lambda_C\to 1^-$.

By continuity and strict monotonicity, there is a unique
$\lambda_C^\ast(n)\in[\lambda_D,1)$ such that
$\Delta_n(\lambda_C^\ast(n))=0$. For $\lambda_C<\lambda_C^\ast(n)$, the
welfare gap is negative and $W_C^\ast<W_D^\ast$; for
$\lambda_C>\lambda_C^\ast(n)$, the welfare gap is positive and
$W_C^\ast>W_D^\ast$. This establishes part (iv) and the stated
threshold property.
\end{proof}

\subsection{A network-size threshold and Theorem~\ref{thm:threshold}}

The main text also asks: for a given level of complementarities, how
large can the network be before centralized monitoring becomes strictly
preferable?

To address this, it is convenient to make explicit the dependence of
welfare on $n$:
\[
W_C^\ast(n,\lambda_C,K_C)
=
\frac{n}{1-\lambda_C}
+
\frac{n^2 \varphi^2}{2K_C(1-\lambda_C)^2},
\qquad
W_D^\ast(n,\lambda_D,K_D)
=
\frac{n}{1-\lambda_D}
+
\frac{n^2\varphi^2}{2K_D(1-\lambda_D)^2},
\]
where the benchmark $W_D^\ast$ is recovered by setting $\lambda_D=0$.

For any fixed parameter values $(\lambda_C,\lambda_D,K_C,K_D,\varphi)$,
we can ask whether the sign of $W_C^\ast(n,\lambda_C,K_C) -
W_D^\ast(n,\lambda_D,K_D)$ changes as network size $n$ increases. When
this sign changes exactly once---from negative to positive—there is a
well-defined network-size threshold $n^\ast$ such that decentralization
is preferred for $n<n^\ast$ and centralization is preferred for
$n>n^\ast$.

Theorem \ref{thm:threshold}  extends Theorem \ref{thm:n_unit_threshold_final}  from the main text. It formalizes the idea  that in general setting where (i) decentralization and centralization may feature
different complementarity levels ($0 \leq \lambda_D < \lambda_C < 1$),
and (ii) the cost of centralization may depend on network size (for
instance, because maintaining a centralized system over a larger network
is more complex). Under mild conditions ($K_D>0$, $K_C \geq K_D$, and
$\lambda_C > \lambda_D$), the theorem shows that there exists a unique
$n^\ast$ such that
\[
  W_D^*(n,\lambda_D,K_D) > W_C^*(n,\lambda_C,K_C) \quad\text{for } n<n^*,
\]
and
\[
  W_C^*(n,\lambda_C,K_C) > W_D^*(n,\lambda_D,K_D) \quad\text{for } n>n^*.
\]

The intuition is straightforward. When the network is small, the gains
from internalizing complementarities under centralization are modest,
whereas the higher institutional cost $K_C$ is salient. In this region,
decentralization is preferred. As the network grows, however, the impact
of complementarities becomes more pronounced: each unit's effort affects
many others, and the $n^2$ term in $W_C^\ast(n,\lambda_C,K_C)$ reflects
the cumulative benefit from coordinating monitoring across the entire
network. Beyond a critical size $n^\ast$, these gains from
internalizing complementarities outweigh the higher cost of
centralization, and the regulator optimally switches regimes.

In the empirical analysis, the estimated breakpoints for county and
chain networks can be interpreted as realizations of such network-size
thresholds $n^\ast(\lambda)$, with counties representing geographic
monitoring systems and chains representing organizational monitoring
systems operating under different complementarity and cost structures.

\subsection{Spectral generalization on arbitrary networks}
\label{app:spectral_generalization}

This subsection shows that the centralization threshold derived in the
main text extends naturally to arbitrary monitoring networks. Rather than
treating each facility as symmetrically linked to all others, we allow
effort complementarities to operate along an arbitrary nonnegative,
irreducible adjacency matrix $G$. In this setting, the regulator's
problem depends on the network only through a Bonacich-type aggregate
\[
S(\lambda_r,G)\;\equiv\; \mathbf{1}_n'(\mathbf{I}_n - \lambda_r G)^{-1}\mathbf{1}_n,
\]
which can be interpreted as the ``effective size'' of the monitoring
network at complementarity level $\lambda_r$. We show that centralized
monitoring becomes optimal once $S(\lambda_C,G)$ exceeds a finite
threshold, and that this induces a corresponding threshold in the
spectral radius of $G$. The complete-graph model in the main text is a
special case in which $S(\lambda_r,G)$ collapses to a simple function of
the number of units $n$.

\begin{theorem}[Spectral centralization threshold]
\label{thm:spectral_threshold}
Let $G$ be a nonnegative, irreducible $n\times n$ adjacency matrix with
spectral radius $\psi(G)$. For each regime $r\in\{D,C\}$, let
$\lambda_r\in[0,1)$ satisfy
\[
\lambda_r \,\psi(G) \;<\; 1,
\]
and define
\[
S_r(G)
\;\equiv\;
S(\lambda_r,G)
\;=\;
\mathbf{1}_n'(\mathbf{I}_n - \lambda_r G)^{-1}\mathbf{1}_n.
\]
Consider the monitoring problem in which, under regime $r$, the
regulator chooses $\mu_r\ge0$ and equilibrium effort is given by
\[
e^r(\mu_r,G)
=
(1+\mu_r\varphi)\,(\mathbf{I}_n - \lambda_r G)^{-1}\mathbf{1}_n,
\]
so that welfare is
\[
W_r(\mu_r;G)
=
\mathbf{1}_n' e^r(\mu_r,G) - \tfrac12 K_r \mu_r^2,
\qquad K_r>0.
\]
Assume $K_C\ge K_D>0$ and $\lambda_C>\lambda_D\ge0$.

Then:
\begin{enumerate}
\item[(i)] For each $r\in\{D,C\}$, the regulator's optimal monitoring
      intensity is
      \[
      \mu_r^\ast(G)
      =
      \frac{\varphi}{K_r}\,S_r(G),
      \]
      and the associated equilibrium welfare is
      \[
      W_r^\ast(G)
      =
      S_r(G)
      +
      \frac{\varphi^2}{2K_r}\,S_r(G)^2.
      \]

\item[(ii)] For any fixed parameters
      $(\lambda_D,\lambda_C,K_D,K_C,\varphi)$ with
      $\lambda_C>\lambda_D$ and $K_C\ge K_D$, there exists a unique
      scalar threshold $\bar S>0$ such that
      \[
      W_C^\ast(G) \;\ge\; W_D^\ast(G)
      \quad\Longleftrightarrow\quad
      S_C(G) \;\ge\; \bar S.
      \]
      In particular, centralization is strictly optimal whenever
      $S_C(G)>\bar S$.

\item[(iii)] If, in addition, $G$ varies along a one-parameter family of
      nonnegative, irreducible networks $\{G(\kappa)\}$ such that
      $\psi(G(\kappa))$ and $S_C(G(\kappa))$ are strictly increasing in
      $\kappa$, then there exists a unique spectral-radius threshold
      $\psi^\ast$ such that
      \[
      W_C^\ast(G(\kappa)) \;>\; W_D^\ast(G(\kappa))
      \quad\text{whenever}\quad
      \psi(G(\kappa)) \;>\; \psi^\ast.
      \]
\end{enumerate}
\end{theorem}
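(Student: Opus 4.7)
The overall strategy is to handle the three parts in order, with (ii) carrying the substantive content and (iii) following by composition. The structural fact that reduces everything to scalar comparisons is that applying $\mathbf{1}_n'$ to the closed-form equilibrium effort $e^r(\mu_r, G) = (1+\mu_r\varphi)(\mathbf{I}_n - \lambda_r G)^{-1}\mathbf{1}_n$ collapses the aggregate to $\mathbf{1}_n' e^r(\mu_r, G) = (1+\mu_r\varphi) S_r(G)$, so welfare depends on $G$ only through the scalar $S_r(G)$---exactly as in the complete-graph benchmark where $S_r = n/(1-\lambda_r)$.

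For part (i), I would substitute this aggregate into $W_r(\mu_r; G) = (1+\mu_r\varphi) S_r(G) - \tfrac12 K_r \mu_r^2$, note strict concavity in $\mu_r$, and solve the first-order condition $\varphi S_r(G) = K_r \mu_r$ to obtain $\mu_r^\ast(G) = \varphi S_r(G)/K_r$. Plugging back and combining the resulting quadratic benefit with the halved quadratic cost gives $W_r^\ast(G) = S_r(G) + \tfrac{\varphi^2}{2K_r} S_r(G)^2$. For part (ii), I would analyze the welfare gap $\Delta(G) \equiv W_C^\ast(G) - W_D^\ast(G)$ through the scalar function $f_r(s) = s + \tfrac{\varphi^2}{2K_r} s^2$, which is strictly increasing and strictly convex on $[0,\infty)$. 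Two monotonicity inputs drive the argument: (a) the Neumann series $(\mathbf{I}_n - \lambda G)^{-1} = \sum_{k\geq 0} \lambda^k G^k$ is entrywise nondecreasing in $\lambda$ (using $\lambda\psi(G)<1$), so $\lambda_C > \lambda_D$ delivers $S_C(G) \geq S_D(G)$; (b) because $K_C \geq K_D$, $f_C(s) \leq f_D(s)$ pointwise. At the boundary $S_C(G) = S_D(G)$ one gets $\Delta \leq 0$ (strictly when $K_C > K_D$), while as $S_C(G)$ grows the convex $S_C^2/K_C$ term dominates and drives $\Delta \to +\infty$; continuity and strict monotonicity of $f_C$ in $s$ then yield a unique crossing $\bar S$, giving the stated equivalence. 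Part (iii) is then a composition: under the assumed monotonicity of $\{G(\kappa)\}$, both $\kappa \mapsto S_C(G(\kappa))$ and $\kappa \mapsto \psi(G(\kappa))$ are strictly increasing and continuous, so the $S_C$-threshold $\bar S$ pulls back to a unique $\kappa^\ast$ and hence to a unique spectral-radius threshold $\psi^\ast = \psi(G(\kappa^\ast))$.

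The main conceptual obstacle is in part (ii): $\Delta(G)$ genuinely depends on both $S_C(G)$ and $S_D(G)$, which are coupled through the same $G$, so the claim that the threshold lies in $S_C(G)$ alone requires care. The cleanest resolution is to view $\bar S$ as implicitly indexed by the decentralized baseline through $\lambda_D$ and $G$---an ambiguity that disappears along the one-parameter family of part (iii), where the monotone coupling between $S_C$ and $S_D$ is unique. A parallel route that recovers the complete-graph logic of Theorem~\ref{thm:n_unit_threshold_final} is to project $\mathbf{1}_n$ onto the Perron eigenvector of $G$, after which both $S_C$ and $S_D$ become explicit monotone functions of $\psi(G)$ and the threshold argument reduces to the scalar calculation already completed in the main text.
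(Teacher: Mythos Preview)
Your proposal is correct and follows essentially the same route as the paper: collapse aggregate effort to $(1+\mu_r\varphi)S_r(G)$, optimize the resulting concave quadratic for part (i); for part (ii), use the Neumann series to get $S_C(G)>S_D(G)$, hold $S_D$ fixed, and apply the intermediate-value theorem to the strictly increasing map $S_C\mapsto f_C(S_C)-f_D(S_D)$; then compose with the assumed monotonicity in $\kappa$ for part (iii). Your discussion of the ``conceptual obstacle''---that $\bar S$ is really indexed by $S_D(G)$---is in fact more explicit than the paper's own proof, which simply treats $S_D$ as fixed without comment, so you have correctly identified and resolved the one genuine subtlety.
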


\begin{proof}
\emph{Part (i).}
By definition,
\[
e^r(\mu_r,G)
=
(1+\mu_r\varphi)\,(\mathbf{I}_n - \lambda_r G)^{-1}\mathbf{1}_n,
\]
so that
\[
\mathbf{1}_n' e^r(\mu_r,G)
=
(1+\mu_r\varphi)\,\mathbf{1}_n'(\mathbf{I}_n - \lambda_r G)^{-1}\mathbf{1}_n
=
(1+\mu_r\varphi)\,S_r(G).
\]
The regulator's objective can thus be written as
\[
W_r(\mu_r;G)
=
(1+\mu_r\varphi)\,S_r(G)
-
\tfrac12 K_r \mu_r^2.
\]
This is a strictly concave quadratic in $\mu_r$. Differentiating and
setting the first-order condition equal to zero gives
\[
\frac{\partial W_r}{\partial \mu_r}
=
\varphi S_r(G) - K_r \mu_r
= 0
\quad\Longrightarrow\quad
\mu_r^\ast(G)
=
\frac{\varphi}{K_r} S_r(G),
\]
which proves the expression for $\mu_r^\ast(G)$. Substituting back into
$W_r$ yields
\[
W_r^\ast(G)
=
\bigl(1+\mu_r^\ast(G)\varphi\bigr)S_r(G)
-
\tfrac12 K_r\bigl(\mu_r^\ast(G)\bigr)^2
=
S_r(G) + \frac{\varphi^2}{2K_r} S_r(G)^2,
\]
establishing the claimed welfare formula.

\medskip\noindent
\emph{Part (ii).}
From part (i),
\[
W_C^\ast(G) - W_D^\ast(G)
=
\Bigl[S_C(G) - S_D(G)\Bigr]
+
\frac{\varphi^2}{2}
\left(
  \frac{S_C(G)^2}{K_C}
  -
  \frac{S_D(G)^2}{K_D}
\right).
\]
Since $\lambda_C>\lambda_D$ and $G$ is nonnegative and irreducible,
$(\mathbf{I}_n - \lambda_r G)^{-1}$ is well-defined and strictly
larger (entrywise) for $r=C$ than for $r=D$, which implies
$S_C(G) > S_D(G)$.\footnote{This follows from the Neumann-series
expansion $(\mathbf{I}_n - \lambda_r G)^{-1}
= \sum_{k\ge0} \lambda_r^k G^k$ and the monotonicity in $\lambda_r$.}
Thus the welfare difference can be viewed as a continuous function of
the scalar argument $S_C(G)$, holding $S_D(G)$ fixed.

Define
\[
\Delta(S_C,S_D)
\equiv
S_C - S_D
+
\frac{\varphi^2}{2}
\left(
  \frac{S_C^2}{K_C}
  -
  \frac{S_D^2}{K_D}
\right),
\]
so that $W_C^\ast(G)-W_D^\ast(G)=\Delta(S_C(G),S_D(G))$. For fixed
$S_D>0$, $\Delta(\cdot,S_D)$ is a strictly increasing and continuous
function of $S_C$ with
\[
\lim_{S_C\downarrow S_D}\Delta(S_C,S_D)
\;<\;0
\quad\text{and}\quad
\lim_{S_C\to\infty}\Delta(S_C,S_D)
\;=\;+\infty,
\]
where the first inequality uses $K_C\ge K_D$ and the fact that
$S_C=S_D$ implies $W_C^\ast<W_D^\ast$ because centralization is at least
as costly. By the intermediate-value theorem, there exists a unique
$\bar S> S_D$ such that $\Delta(\bar S,S_D)=0$, and strict monotonicity
implies
\[
\Delta(S_C,S_D) \ge 0
\quad\Longleftrightarrow\quad
S_C \ge \bar S.
\]
Substituting $S_C=S_C(G)$ then yields the desired characterization
$W_C^\ast(G)\ge W_D^\ast(G)$ if and only if $S_C(G)\ge\bar S$.

\medskip\noindent
\emph{Part (iii).}
Consider a one-parameter family $\{G(\kappa)\}$ of nonnegative,
irreducible networks such that both $\psi(G(\kappa))$ and
$S_C(G(\kappa))$ are strictly increasing in $\kappa$. Since the mapping
$\kappa\mapsto S_C(G(\kappa))$ is strictly increasing and continuous,
its range is an interval $(S_C^{\min},S_C^{\max})$ with
$0<S_C^{\min}<S_C^{\max}\le\infty$. From part (ii), there is a unique
$\bar S$ such that $W_C^\ast \ge W_D^\ast$ if and only if
$S_C(G(\kappa))\ge\bar S$. Strict monotonicity of $S_C(G(\kappa))$ in
$\kappa$ implies that there is a unique $\kappa^\ast$ such that
$S_C(G(\kappa^\ast))=\bar S$ and
\[
W_C^\ast(G(\kappa))>W_D^\ast(G(\kappa))
\quad\text{for all}\quad \kappa>\kappa^\ast.
\]
Since $\psi(G(\kappa))$ is also strictly increasing in $\kappa$, this
induces a unique spectral-radius threshold
$\psi^\ast=\psi(G(\kappa^\ast))$ with the stated property. 
\end{proof}

\begin{remark}[Complete graph and effective network size]
\label{rem:complete_graph}
In the complete-graph case considered in the main text, $G$ has equal
weights across all off-diagonal entries and the scalar
$S(\lambda_r,G)$ reduces to a simple function of the number of units
$n$ and the complementarity parameter $\lambda_r$. In this sense,
$S(\lambda_r,G)$ can be interpreted as an ``effective network size'' and
Theorems~\ref{thm:n_unit_threshold_final} and~\ref{thm:threshold} are
special cases of Theorem~\ref{thm:spectral_threshold}. The spectral
generalization shows that what matters for the centralization decision
is not only how many facilities are monitored, but how strongly they are
connected through the monitoring network.
\end{remark}

\section{Stochastic Extension and Information Aggregation}
\label{app:stochastic}

This appendix provides a stochastic extension of the model that
rationalizes the variance and deterioration patterns discussed in
Section~\ref{sec:stochastic_variance}. We show that when shocks are
correlated along the monitoring network and effort responds to these
shocks, decentralized monitoring in large, highly interconnected
networks becomes fragile: idiosyncratic shocks are amplified, leading to
greater dispersion in outcomes.

\subsection{States, signals, and shocks}

For each unit $i=1,\dots,n$, let $\theta_i$ denote an underlying risk
state (e.g., the latent propensity for regulatory failure), given by
\[
  \theta_i = \bar{\theta} + \varepsilon_i,
\]
where $\varepsilon=(\varepsilon_1,\dots,\varepsilon_n)'$ is a zero-mean
Gaussian shock vector with covariance matrix
\[
  \Sigma_\varepsilon = \sigma^2 (I - \rho G)^{-1}(I - \rho G')^{-1},
\]
with $\sigma^2>0$ and $\rho\in[0,1)$ capturing the strength of
correlation along the network $G$. When $\rho>0$, shocks are more
strongly correlated along paths in the monitoring network.

A central monitor observes a signal $s_i=\theta_i+\eta_i$, where
$\eta_i\sim\mathcal{N}(0,\tau^2)$ is independent measurement noise. Under
decentralized monitoring, local monitors observe only their own signal
$s_i$ and choose effort $e_i$ based on $s_i$ alone. Under centralization,
a planner observes the full signal vector $s=(s_1,\dots,s_n)'$ and chooses
the entire effort vector $e=(e_1,\dots,e_n)'$.

\subsection{Linear effort rules and outcome variance}

For tractability, we focus on linear effort rules. Under
decentralization, local monitors choose
\[
  e_i^D = \alpha_D + \beta_D s_i,
\]
while under centralization the planner chooses
\[
  e^C = \alpha_C \mathbf{1}_n + B_C s,
\]
where $B_C$ is an $n\times n$ weight matrix. Substituting these rules
into the quadratic payoff function yields the optimal coefficients
$\beta_D$ and $B_C$ as functions of $(\lambda_r,\sigma^2,\tau^2,G)$.\footnote{The
exact expressions are standard for linear-Gaussian-quadratic problems
and are omitted here for brevity. They are available upon request.}

Let $y_i$ denote a measure of regulatory performance (e.g., the negative
of deficiencies or an increasing transformation of quality), given by
\[
  y_i = \theta_i + \gamma e_i + \xi_i,
\]
where $\gamma>0$ measures the effectiveness of monitoring and
$\xi_i\sim\mathcal{N}(0,\omega^2)$ is an outcome shock independent of
$(\varepsilon,\eta)$. Stacking across units,
\[
  y^r = A_r \varepsilon + u_r,
\]
where $A_r$ is an $n\times n$ matrix that depends on the regime
$r\in\{D,C\}$ and $u_r$ collects shocks independent of $\varepsilon$.

The covariance matrix of outcomes under regime $r$ is
\[
  \Sigma_y^r = A_r \Sigma_\varepsilon A_r' + \Sigma_{u_r}.
\]
We are interested in how the dispersion of outcomes,
$\mathrm{Var}_r(y_i)$ or $\mathrm{tr}(\Sigma_y^r)$, varies with network
size $n$, the network structure $G$, and the complementarity parameter
$\lambda_r$.

\subsection{Variance amplification in large, interconnected networks}

To connect more closely to the main text, consider a regular network
where $G$ has largest eigenvalue $\psi(G)$ and corresponding eigenvector
proportional to $\mathbf{1}_n$. Under both decentralization and
centralization, the solution for $e^r$ can be written in the form
\[
  e^r = M_r(\lambda_r,G)\,\varepsilon + v_r,
\]
for some matrix $M_r(\cdot)$ and noise term $v_r$ independent of
$\varepsilon$. Substituting this into the expression for $y^r$ yields
\[
  y^r = H_r(\lambda_r,G)\,\varepsilon + w_r,
\]
with $H_r(\cdot)$ linear in $M_r$ and $w_r$ independent of
$\varepsilon$.

The following proposition formalizes the variance amplification effect
underlying Proposition~\ref{prop:variance_above_threshold} in the main
text.

\begin{proposition}[Variance amplification]
\label{prop:variance}
Suppose that $G$ is a regular network with largest eigenvalue
$\psi(G)$ and that $\rho>0$. Then, for any regime $r$ such that the
effort mapping $M_r(\lambda_r,G)$ is well-defined, the largest
eigenvalue of the outcome covariance matrix $\Sigma_y^r$ is increasing
in both $\lambda_r$ and $\psi(G)$. In particular, holding other
parameters fixed,
\[
  \lim_{\lambda_r \to 1/\psi(G)} \lambda_{\max}(\Sigma_y^r) = +\infty.
\]
\end{proposition}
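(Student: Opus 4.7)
The plan is to exploit the regularity of $G$ to diagonalize all relevant operators simultaneously, reducing the matrix problem to a family of scalar calculations indexed by the eigenvalues of $G$. A regular nonnegative irreducible matrix has $\mathbf{1}_n$ as its Perron eigenvector with eigenvalue $\psi(G)$, and every operator appearing in the construction of $\Sigma_y^r$ is an analytic function of $G$: the resolvents $(I-\rho G)^{-1}$ and $(I-\lambda_r G)^{-1}$ (well-defined because $\rho\psi(G)<1$ and $\lambda_r\psi(G)<1$), and therefore the transmission matrix $H_r(\lambda_r,G)$, which is assembled from these resolvents, $G$ itself, and scalar coefficients coming from the linear-quadratic best response. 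All of these operators share the eigenbasis of $G$, with eigenvalues obtained by applying the corresponding scalar function to $\mu_1=\psi(G)\ge\mu_2\ge\cdots\ge\mu_n$.

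Second, I would read off the eigenvalues of $\Sigma_y^r$ in this common basis. The shock covariance $\Sigma_\varepsilon$ contributes diagonal entries $\sigma^2/(1-\rho\mu_k)^2$, and $H_r$ has scalar eigenvalues $h_r(\mu_k;\lambda_r)$ of the generic form $c_r+d_r/(1-\lambda_r\mu_k)$, where $c_r,d_r$ depend on $(\gamma,\varphi,K_r,\lambda_r)$ but not on $k$, and arise from substituting the linear effort rule into $y^r=\theta+\gamma e^r+\xi$. The noise covariance $\Sigma_{w_r}$ is bounded and nonnegative on compact subsets of the admissible parameter region and contributes a bounded diagonal in the same basis. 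The $k$-th eigenvalue of $\Sigma_y^r$ therefore equals
\[
\sigma^2\,\frac{h_r(\mu_k;\lambda_r)^2}{(1-\rho\mu_k)^2}+\omega_{r,k},
\]
where $\omega_{r,k}\ge 0$ is bounded uniformly on compact subsets of $[0,1/\psi(G))$.

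Third, the claimed monotonicity and divergence reduce to scalar facts about the Perron direction. Since $\mu_1=\psi(G)>\mu_k$ for $k\ge 2$ and every factor above is strictly increasing in its argument on the relevant interval, the Perron eigenvalue dominates the others on a neighborhood of the critical point $\lambda_r\psi(G)=1$. The map $\lambda_r\mapsto(1-\lambda_r\psi(G))^{-1}$ is strictly increasing and continuous on $[0,1/\psi(G))$ and diverges as $\lambda_r\to 1/\psi(G)$; the analogous monotonicity in $\psi(G)$ holds for fixed $\lambda_r$ with $\lambda_r\psi(G)<1$, and $(1-\rho\psi(G))^{-2}$ is also increasing in $\psi(G)$ whenever $\rho>0$. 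Squaring a nonnegative increasing scalar preserves monotonicity, so $\lambda_{\max}(\Sigma_y^r)$ inherits both monotonicity properties and the divergence.

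The main obstacle will be verifying that the coefficient $d_r$ in the expansion $h_r(\mu;\lambda_r)=c_r+d_r/(1-\lambda_r\mu)$ is genuinely nonzero and bounded away from zero as $\lambda_r\to 1/\psi(G)$, so that amplification on the Perron direction is not cancelled by the additive piece $\omega_{r,k}$. This is checked by inspecting the linear-quadratic best response: the fixed-point equation $(I-\lambda_r G)e=\text{affine}(s)$ forces the signal-to-effort transmission to contain the resolvent as a nontrivial factor whenever monitoring is responsive ($\varphi>0$) and effort enters outcomes ($\gamma>0$), pinning down $d_r$ strictly away from zero. Given this, the Perron component of $\Sigma_y^r$ blows up at rate $(1-\lambda_r\psi(G))^{-2}$ while the remaining eigenvalues stay bounded, yielding $\lambda_{\max}(\Sigma_y^r)\to+\infty$ and completing the argument.
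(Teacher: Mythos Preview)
Your proposal is correct and follows essentially the same spectral/resolvent approach as the paper: both arguments observe that equilibrium effort depends on $\varepsilon$ through $(I-\lambda_r G)^{-1}$, that $\Sigma_\varepsilon$ has full support on the Perron direction when $\rho>0$, and that the dominant eigenvalue of $\Sigma_y^r$ therefore scales as $(1-\lambda_r\psi(G))^{-2}$. Your version is considerably more detailed---making explicit the simultaneous diagonalization, the boundedness of the noise terms, and the non-vanishing of the resolvent coefficient $d_r$---but the underlying idea is the same as the paper's short sketch.
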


\begin{proof}
Under the linear-quadratic structure in Section~\ref{sec:theory},
equilibrium effort solves
\[
  e^r = \mu_r (\mathbf{I}_n - \lambda_r G)^{-1} \mathbf{1}_n
        + M_r(\lambda_r,G)\,\varepsilon,
\]
so that all dependence on $\varepsilon$ enters through the resolvent
$(\mathbf{I}_n - \lambda_r G)^{-1}$. The spectral radius of this matrix
is $\bigl(1-\lambda_r\psi(G)\bigr)^{-1}$, which diverges as
$\lambda_r\psi(G)\to 1$. Since $\Sigma_\varepsilon$ has full support
along the eigenvector associated with $\psi(G)$ when $\rho>0$, the
largest eigenvalue of $\Sigma_y^r$ scales proportionally to
$\bigl(1-\lambda_r\psi(G)\bigr)^{-2}$, which establishes the claim.
\end{proof}

Proposition~\ref{prop:variance} formalizes the intuition that when
monitoring effort responds to shocks and complementarities $\lambda_r$
are strong in a highly connected network (large $\psi(G)$),
decentralized monitoring becomes fragile: idiosyncratic shocks are
amplified along the network, leading to high dispersion in outcomes.
This effect is particularly relevant when monitoring remains
decentralized in networks whose size and connectivity suggest that
centralization would be welfare-enhancing, as in the ``too big to
monitor'' regime highlighted by Corollary~\ref{cor:n_of_lambda}.

\subsection*{Concluding remarks}

The results in this appendix provide a unified theoretical account of
network-size thresholds and variance amplification in monitoring systems.
In the benchmark complete-network case, Theorem~\ref{thm:n_unit_threshold_final}
shows that for each network size $n$ there is a unique complementarity
threshold $\lambda^\ast(n)$ at which the regulator is indifferent between
centralized and decentralized monitoring, with centralization strictly
dominating for $\lambda > \lambda^\ast(n)$. Corollary~\ref{cor:n_of_lambda}
inverts this relationship and delivers an endogenous network-size threshold
$n^\ast(\lambda)$: for a given spillover strength $\lambda$, decentralized
monitoring is optimal only when the network remains below $n^\ast(\lambda)$.
Appendix~\ref{app:theory} extends this comparison to allow for positive
decentralized complementarities ($\lambda_D>0$) and more flexible cost
structures while preserving the existence and uniqueness of a switching point
between regimes (Theorem~\ref{thm:threshold}).

The spectral generalization shows that, on an arbitrary network $G$, the
relevant notion of ``scale'' is not just the number of units but a spectral
measure of interconnectedness. In the linear–quadratic network environment
studied in \citet{ballester2006s} and \citet{bramoulle2014strategic}, equilibrium
effort can be written in terms of the resolvent $(I - \lambda G)^{-1}$, and the
key object becomes the largest eigenvalue $\psi(G)$. The effective strength of
complementarities is governed by the product $\lambda \psi(G)$, and
Theorem~\ref{thm:spectral_threshold} shows that the welfare comparison between
centralized and decentralized monitoring is driven by this spectral term: as
$\lambda \psi(G)$ approaches one from below, the gains from internalizing
network complementarities dominate the higher institutional cost of
centralization, yielding a \emph{spectral} centralization threshold.

The stochastic extensions in Appendix~\ref{app:stochastic} then show that the
same spectral object governs variance and deterioration. When shocks are
correlated along the monitoring network and effort responds linearly to these
shocks, equilibrium outcomes inherit the resolvent structure, so that the
largest eigenvalue of the outcome covariance matrix scales with the square of
the resolvent. Proposition~\ref{prop:variance} and
Proposition~\ref{prop:variance_above_threshold} establish that
$\lambda_{\max}(\Sigma_y^r)$ diverges as $\lambda \psi(G) \to 1$, implying that
networks operating near or above the theoretical threshold amplify
idiosyncratic disturbances and exhibit higher dispersion and more frequent
deterioration in performance.

Taken together, these results show that a single spectral condition
$\lambda \psi(G)$ jointly determines (i) the welfare threshold at which
centralization becomes optimal and (ii) the extent of variance amplification in
equilibrium outcomes. This provides a coherent interpretation of the empirical
results: the estimated breakpoints in county and chain size can be viewed as
finite-sample realizations of network-size thresholds $n^\ast(\lambda)$ (or
their spectral counterparts in more general architectures), and the higher
dispersion and deterioration observed in large geographic networks are
consistent with systems operating closer to the spectral boundary
$\lambda \psi(G) = 1$, where complementarities and correlated shocks are
jointly amplified.

\section{Tables and Figures}
\label{app:tables_figures}

\begin{table}[htbp]\centering
\caption{County Peer Spillovers}
\label{tab:spillover_county}
\begin{threeparttable}
\footnotesize
\begin{tabular}{lccc}
\toprule
                    & (1) Overall rating & (2) Staffing rating & (3) Total deficiencies \\
\midrule
County peer mean    & 0.273   & 0.162   & 0.443   \\
                    & (0.029) & (0.025) & (0.036) \\
\midrule
Observations        & 12{,}527 & 12{,}459 & 12{,}606 \\
$R^{2}$             & 0.132    & 0.314    & 0.301    \\
\bottomrule
\end{tabular}
\begin{tablenotes}
\footnotesize
\item Notes: Each column reports a separate regression of the indicated outcome
on the corresponding county peer mean, controlling for number of beds, ownership,
and state fixed effects. Standard errors clustered at the county level are in parentheses.
\end{tablenotes}
\end{threeparttable}
\end{table}

\begin{table}[htbp]\centering
\caption{Chain and County Peer Spillovers (Chain-Affiliated Facilities)}
\label{tab:spillover_chain_county}
\begin{threeparttable}
\footnotesize
\begin{tabular}{lccc}
\toprule
                    & (1) Overall rating & (2) Staffing rating & (3) Total deficiencies \\
\midrule
Chain peer mean     & 0.720   & 0.723   & 0.560   \\
                    & (0.020) & (0.019) & (0.029) \\
County peer mean    & 0.220   & 0.126   & 0.455   \\
                    & (0.024) & (0.017) & (0.032) \\
\midrule
Observations        & 8{,}791  & 8{,}757  & 8{,}852  \\
$R^{2}$             & 0.231    & 0.460    & 0.367    \\
\bottomrule
\end{tabular}
\begin{tablenotes}
\footnotesize
\item Notes: Each column reports a separate regression of the indicated outcome
on chain-level and county-level peer means, controlling for number of beds,
ownership, and state fixed effects. Sample restricted to chain-affiliated
facilities. Standard errors clustered at the chain level are in parentheses.
\end{tablenotes}
\end{threeparttable}
\end{table}

\begin{table}[!htbp]\centering
\caption{County-Level Threshold Estimates for Federal Oversight (SNF-Only)}
\label{tab:county_thresholds}
\begin{threeparttable}
\footnotesize
\begin{tabular}{lccc}
\toprule
 & (1) No Break 
 & (2) Break Only
 & (3) Break + Controls + State FE \\
\midrule
\textit{Break terms} \\
\quad $\log(n)$ 
    & $0.241^{***}$ & $0.120^{***}$ & $0.143^{***}$ \\
    & $(0.008)$     & $(0.007)$     & $(0.009)$     \\[0.25em]
\quad hinge$(\log(n)-2.20)$ 
    &                & $1.342^{***}$ & $1.381^{***}$ \\
    &                & $(0.128)$     & $(0.128)$     \\
\addlinespace
\textit{Controls} \\
share\_for\_profit
    &                &                & $-0.528^{**}$ \\
    &                &                & $(0.250)$     \\
share\_non\_profit
    &                &                & $-0.533^{**}$ \\
    &                &                & $(0.250)$     \\
share\_gov
    &                &                & $-0.553^{**}$ \\
    &                &                & $(0.251)$     \\
share\_in\_chain
    &                &                & $-0.020$      \\
    &                &                & $(0.016)$     \\
avg\_overall\_rating
    &                &                & $-0.004$      \\
    &                &                & $(0.005)$     \\
avg\_beds
    &                &                & $0.0002$      \\
    &                &                & $(0.0001)$    \\
avg\_def\_total
    &                &                & $0.011^{***}$\\
    &                &                & $(0.001)$    \\
State FE 
    & No             & No             & Yes            \\
\midrule
Observations  & 3{,}980 & 3{,}853 & 3{,}853 \\
RMSE          & 0.463   & 0.357   & 0.357   \\
\bottomrule
\end{tabular}
\begin{tablenotes}
\footnotesize
\item Notes: Dependent variable: number of SFF facilities in the county. Breakpoint fixed at $\hat c = 2.20$ (corresponding to $n^\ast \approx 7$ SNFs). Heteroskedasticity-robust standard errors in parentheses. ${}^{***} p<0.01$, ${}^{**} p<0.05$, ${}^{*} p<0.1$.
\end{tablenotes}
\end{threeparttable}
\end{table}

\begin{table}[!htbp]\centering
\caption{Chain-Level Threshold Estimates for Organizational Failures}
\label{tab:chain_thresholds}
\begin{threeparttable}
\footnotesize
\begin{tabular}{lccc}
\toprule
 & (1) No Break 
 & (2) Break Only
 & (3) Break + Controls \\
\midrule
\textit{Break terms} \\
\quad $\log(n)$ 
    & $0.142^{***}$ & $0.046^{**}$ & $0.046^{**}$ \\
    & $(0.017)$     & $(0.023)$    & $(0.021)$    \\[0.25em]
\quad hinge$(\log(n)-\log(34))$
    &                & $0.503^{***}$ & $0.503^{***}$ \\
    &                & $(0.209)$     & $(0.076)$     \\
\addlinespace
\textit{Controls} \\
share\_for\_profit
    &                &                & $0.225^{***}$ \\
    &                &                & $(0.073)$     \\
share\_non\_profit
    &                &                & $0.242^{***}$ \\
    &                &                & $(0.088)$     \\
share\_gov
    &                &                & $0.271^{**}$  \\
    &                &                & $(0.133)$     \\
avg\_overall\_rating
    &                &                & $-0.090^{***}$ \\
    &                &                & $(0.018)$      \\
State FE 
    & No             & No             & No             \\
\midrule
Observations  & 601 & 601 & 601 \\
RMSE          & 0.361 & 0.344 & 0.344 \\
\bottomrule
\end{tabular}
\begin{tablenotes}
\footnotesize
\item Notes: Dependent variable: number of SFF facilities in the chain. Breakpoint fixed at $\hat c = \log(34)$. HC1 robust standard errors in parentheses. ${}^{***} p<0.01$, ${}^{**} p<0.05$, ${}^{*} p<0.1$.
\end{tablenotes}
\end{threeparttable}
\end{table}

\begin{figure}[htbp]
\centering
\includegraphics[width=0.9\textwidth]{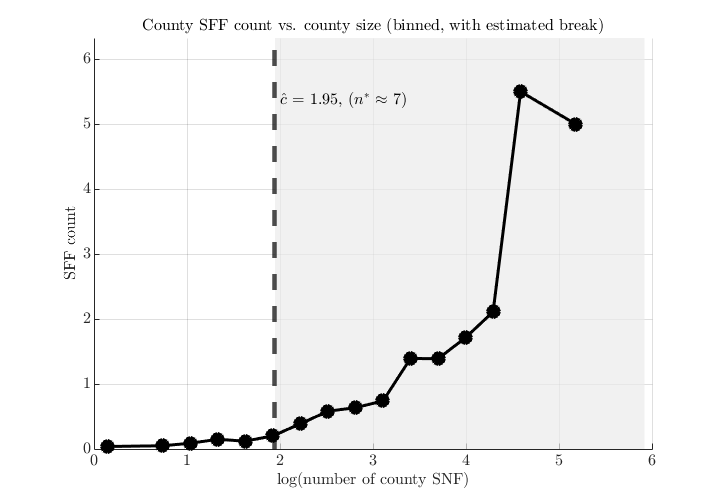}
\caption{Fitted Break Model for the SNF County Network. 
The plot shows binned averages of SFF incidence against $\log(n^{SNF}_c)$
and the fitted kinked regression line with estimated breakpoint 
$\hat c_{SNF} \approx 1.95$ ($n^{SNF*} \approx 7$).}
\label{fig:threshold_snf}
\end{figure}

\begin{figure}[htbp]
\centering
\includegraphics[width=0.9\textwidth]{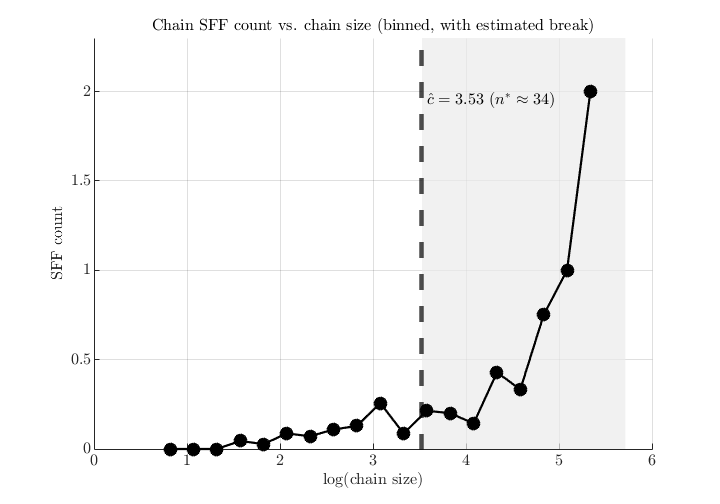}
\caption{Fitted Break Model for the Chain Network.
Binned scatter points show mean SFF incidence by $\log(n^{CHAIN}_j)$.
The fitted kinked regression uses the estimated breakpoint 
$\hat c_{CHAIN} = 3.526$ ($n^{CHAIN*} \approx 34$).}
\label{fig:threshold_chain}
\end{figure}

\begin{table}[htbp]\centering
\caption{County Peer Spillovers Above Size Cutoff}
\label{tab:spillover_county_above}
\begin{threeparttable}
\footnotesize
\begin{tabular}{lccc}
\toprule
                    & (1) Overall rating & (2) Staffing rating & (3) Total deficiencies \\
\midrule
County peer mean    & 0.757   & 0.484   & 0.812   \\
                    & (0.054) & (0.071) & (0.036) \\
\midrule
Observations        & 4{,}635  & 4{,}606  & 4{,}668  \\
$R^{2}$             & 0.211    & 0.330    & 0.340    \\
\bottomrule
\end{tabular}
\begin{tablenotes}
\footnotesize
\item Notes: Each column reports a separate regression of the indicated outcome
on the corresponding county peer mean, controlling for number of beds, ownership,
and state fixed effects. Sample restricted to counties above the estimated
network-size cutoff. Standard errors clustered at the county level are in parentheses.
\end{tablenotes}
\end{threeparttable}
\end{table}

\begin{table}[htbp]\centering
\caption{Chain and County Peer Spillovers Above Size Cutoffs (Chain-Affiliated Facilities)}
\label{tab:spillover_chain_county_above}
\begin{threeparttable}
\footnotesize
\begin{tabular}{lccc}
\toprule
                    & (1) Overall rating & (2) Staffing rating & (3) Total deficiencies \\
\midrule
Chain peer mean     & 0.836   & 0.707   & 0.496   \\
                    & (0.048) & (0.056) & (0.072) \\
County peer mean    & 0.256   & 0.112   & 0.509   \\
                    & (0.043) & (0.027) & (0.059) \\
\midrule
Observations        & 3{,}828  & 3{,}819  & 3{,}852  \\
$R^{2}$             & 0.188    & 0.347    & 0.365    \\
\bottomrule
\end{tabular}
\begin{tablenotes}
\footnotesize
\item Notes: Each column reports a separate regression of the indicated outcome
on chain-level and county-level peer means, controlling for number of beds,
ownership, and state fixed effects. Sample restricted to chain-affiliated facilities
in networks above the estimated size cutoffs. Standard errors clustered at the
chain level are in parentheses.
\end{tablenotes}
\end{threeparttable}
\end{table}

\begin{table}[htbp]\centering
\caption{County Peer Spillovers Below Size Cutoff}
\label{tab:below_cty_spill}
\begin{threeparttable}
\footnotesize
\begin{tabular}{lccc}
\toprule
                    & (1) Overall rating & (2) Staffing rating & (3) Total deficiencies \\
\midrule
County peer mean    & 0.150   & 0.099   & 0.272   \\
                    & (0.025) & (0.025) & (0.027) \\
\midrule
Observations        & 7{,}538  & 7{,}499  & 7{,}582  \\
$R^{2}$             & 0.117    & 0.324    & 0.236    \\
\bottomrule
\end{tabular}
\begin{tablenotes}
\footnotesize
\item Notes: Each column reports a separate regression of the indicated outcome
on the corresponding county peer mean, controlling for number of beds, ownership,
and state fixed effects. Sample restricted to counties at or below the estimated
network-size cutoff. Standard errors clustered at the county level are in parentheses.
\end{tablenotes}
\end{threeparttable}
\end{table}

\begin{table}[htbp]\centering
\caption{Chain and County Peer Spillovers Below Size Cutoffs (Chain-Affiliated Facilities)}
\label{tab:below_chain_cty_spill}
\begin{threeparttable}
\footnotesize
\begin{tabular}{lccc}
\toprule
                    & (1) Overall rating & (2) Staffing rating & (3) Total deficiencies \\
\midrule
Chain peer mean     & 0.680   & 0.723   & 0.579   \\
                    & (0.023) & (0.021) & (0.026) \\
County peer mean    & 0.193   & 0.138   & 0.426   \\
                    & (0.028) & (0.024) & (0.037) \\
\midrule
Observations        & 4{,}777  & 4{,}754  & 4{,}814  \\
$R^{2}$             & 0.282    & 0.527    & 0.386    \\
\bottomrule
\end{tabular}
\begin{tablenotes}
\footnotesize
\item Notes: Each column reports a separate regression of the indicated outcome
on chain-level and county-level peer means, controlling for number of beds,
ownership, and state fixed effects. Sample restricted to chain-affiliated facilities
in networks at or below the estimated size cutoffs. Standard errors clustered at the
chain level are in parentheses.
\end{tablenotes}
\end{threeparttable}
\end{table}

\begin{table}[htbp]\centering
\caption{Comparison of Spillover Peer Effects Above vs.\ Below the Cutoff}
\label{tab:spillover_comparison}
\begin{tabular}{lcccc}
\hline\hline
                    & \multicolumn{2}{c}{County Peer Effect} 
                    & \multicolumn{2}{c}{Chain Peer Effect} \\
\cline{2-3} \cline{4-5}
Outcome             & Above Cutoff & Below Cutoff 
                    & Above Cutoff & Below Cutoff \\ 
\hline
\textbf{Overall  rating}& 
0.757*** & 0.150*** & 
0.836*** & 0.680*** \\
 & (0.054) & (0.025) & (0.048) & (0.023) \\[0.6em]

\textbf{Staffing rating}& 
0.484*** & 0.099*** & 
0.707*** & 0.723*** \\
 & (0.071) & (0.025) & (0.056) & (0.021) \\[0.6em]

\textbf{Total deficiencies}& 
0.812*** & 0.272*** & 
0.496*** & 0.579*** \\
 & (0.036) & (0.027) & (0.072) & (0.026) \\[0.3em]

\hline
State FE included      & Yes & Yes & Yes & Yes \\
Controls included      & Yes & Yes & Yes & Yes \\
Cluster level          & County & County & Chain & Chain \\
\hline\hline
\multicolumn{5}{l}{\footnotesize Notes: Standard errors in parentheses. *** $p<0.01$.} \\
\end{tabular}
\end{table}

\begin{table}[htbp]\centering
\caption{Variance Comparison Across Network-Size Thresholds}
\label{tab:variance_thresholds}
\begin{threeparttable}
\footnotesize
\begin{tabular}{lcccc}
\toprule
 & \multicolumn{2}{c}{Variance} & \multicolumn{2}{c}{Levene Test} \\
\cmidrule(lr){2-3}\cmidrule(lr){4-5}
Outcome & Small network & Large network & $F$-stat & $p$-value \\
\midrule
\multicolumn{5}{l}{\textit{Panel A: Counties (threshold = 9 SNFs)}} \\
Overall rating       & 2.021  & 2.023  & 0.001   & 0.977      \\
Staffing rating      & 1.644  & 1.533  & 14.269  & 0.0001   \\
Total deficiencies   & 91.66 & 178.44& 361.496 & 0.000      \\
\addlinespace
\multicolumn{5}{l}{\textit{Panel B: Chains (threshold = 34 facilities)}} \\
Overall rating       & 1.998  & 1.836  & 26.979  & 0.000 \\
Staffing rating      & 1.539  & 1.082  & 134.35 & 0.000     \\
Total deficiencies   & 157.33& 135.04& 7.27   & 0.007    \\
\bottomrule
\end{tabular}
\begin{tablenotes}
\footnotesize
\item Notes: ``Small network'' denotes counties with $\leq 7$ SNFs (Panel A)
and chains with $\leq 34$ facilities (Panel B). ``Large network'' denotes
counties with $> 7$ SNFs and chains with $> 34$ facilities. Variances are
computed at the facility level. $p$-values are from Levene tests of equality
of variances across the small and large groups.
\end{tablenotes}
\end{threeparttable}
\end{table}

\begin{table}[htbp]\centering
\caption{Deterioration in Deficiencies and Network Size}
\label{tab:deterioration}
\begin{threeparttable}
\footnotesize
\begin{tabular}{lccc}
\toprule
 & Coefficient & 95\% CI (lower) & 95\% CI (upper) \\
\midrule
Constant ($\alpha$)              & -2.366 & -6.367  & 1.633  \\
Large chain ($\theta^{\text{ch}}$)   & 0.204  & -0.149 & 0.557 \\
Large county ($\theta^{\text{cty}}$) & 0.598  & 0.238  & 0.959 \\
\midrule
Observations                      & \multicolumn{3}{c}{14{,}675} \\
$R^{2}$                           & \multicolumn{3}{c}{0.085}    \\
$F$-statistic (overall)          & \multicolumn{3}{c}{23.863 \quad ($p = 5.012\times 10^{-235}$)} \\
\bottomrule
\end{tabular}
\begin{tablenotes}
\footnotesize
\item Notes: Dependent variable is the change in total deficiencies between
two inspection cycles, $\Delta \text{def}_i$. ``Large chain'' is an indicator
for facilities in chains with $>34$ facilities; ``large county'' is an
indicator for facilities in counties with $>7$ SNFs. The regression includes
facility controls and state fixed effects. Confidence intervals are based on
robust standard errors. The design matrix is rank deficient due to collinearity
among controls; redundant regressors are dropped automatically by the
estimation routine.
\end{tablenotes}
\end{threeparttable}
\end{table}

\begin{figure}[t]
    \centering
   
    \includegraphics[width=0.8\textwidth]{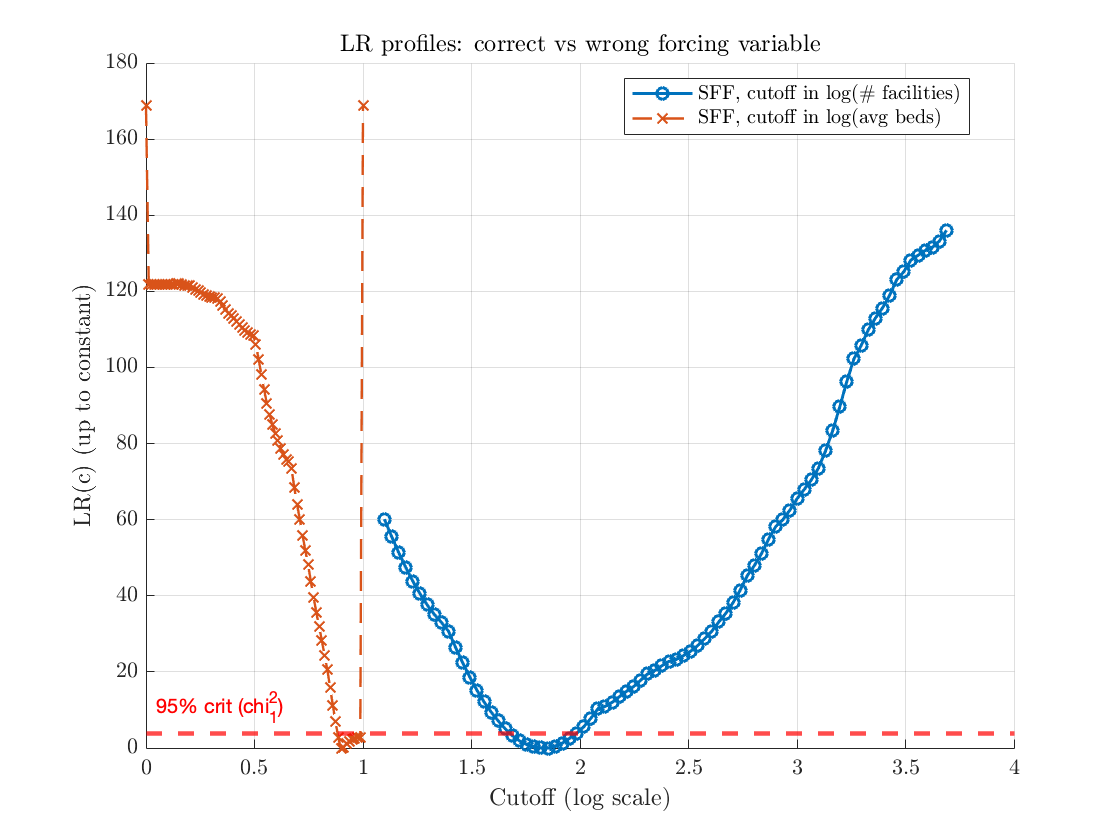}
    \caption{Likelihood-ratio profiles for county SFF thresholds using the correct 
    and a placebo forcing variable. The solid line plots the LR profile when the 
    forcing variable is $\log(\text{\# facilities})$, showing a clear interior 
    minimum around $\log n \approx 1.86$ ($n^\ast \approx 6$--$7$). The dashed line 
    plots the LR profile when the forcing variable is $\log(\text{average beds})$, 
    which is nearly monotone and attains its minimum at the boundary of the grid, 
    consistent with the absence of a meaningful threshold in this dimension.}
    \label{fig:LR_profile_county_placebo}
\end{figure}
\begin{table}[t]
\centering
\caption{Placebo threshold tests using alternative county outcomes}
\label{tab:placebo_wrong_outcome}
\begin{threeparttable}
\begin{tabular}{lccc}
\toprule
Outcome $y$ 
  & Forcing variable 
  & LR--mini cutoff $\hat c$ (log \# facilities) 
  & Implied $n^{\ast} = e^{\hat c}$ \\
\midrule
SFF count (baseline)       
  & $\log(\text{\# facilities})$ 
  & $1.843$ 
  & $6.32$ \\
Share non--profit (placebo 1) 
  & $\log(\text{\# facilities})$ 
  & $0.793$ 
  & $2.21$ \\
Share government (placebo 2) 
  & $\log(\text{\# facilities})$ 
  & $0.693$ 
  & $2.00$ \\
\bottomrule
\end{tabular}
\begin{tablenotes}[flushleft]\footnotesize
\item Notes: Each row reports the cutoff $\hat c$ that minimizes the residual sum of 
squares in a single--kink regression of the indicated outcome $y$ on 
$\log(\text{\# facilities})$, a kink term $(\log n - c)^{+}$, and the full set of 
county controls used in the baseline specification. The baseline row reproduces the 
main result for SFF counts. For the placebo outcomes, the LR profiles are relatively 
flat and the estimated cutoffs are very close to the lower bound of the grid 
($n^{\ast}\approx 2$ facilities), consistent with the absence of a meaningful 
network--size threshold in ownership composition.
\end{tablenotes}
\end{threeparttable}
\end{table}

\end{document}